\newtheorem{thm}{Theorem}
\newtheorem{lem}{Lemma}
\newtheorem{prop}{Proposition}
\newtheorem{remark}{Remark}
\theoremstyle{definition}
\newtheorem{definition}{Definition}
\newtheorem{assumption}{Assumption}
\title{Optimal Kullback-Leibler Aggregation via Information Bottleneck}
\author{Bernhard C. Geiger,~\IEEEmembership{Member, IEEE}, Tatjana Petrov, Gernot Kubin,~\IEEEmembership{Member, IEEE}, Heinz Koeppl%
\thanks{Parts of the work of Bernhard C. Geiger have been funded by the Austrian Research Association, project number 06/12684. Heinz Koeppl acknowledges the support from the Swiss National Science Foundation, grant number PP00P2 128503/1. Tatjana Petrov acknowledges the support by SystemsX.ch (the Swiss Inititative for Systems Biology) and by SNSF - the Swiss National Science Fondation.}
\thanks{Bernhard C. Geiger (geiger@ieee.org) is with the Institute for Communications Engineering, TU Munich, and with the Signal Processing and Speech Communication Laboratory, Graz University of Technology.}
\thanks{Tatjana Petrov is with the Automatic Control Lab, ETH Z\"urich and with the IST Austria}
\thanks{Gernot Kubin (g.kubin@ieee.org) is with the Signal Processing and Speech Communication Laboratory, Graz University of Technology.}
\thanks{Heinz Koeppl is with the Department of Electrical Engineering and Information Technology, TU Darmstadt.}
} 
\def \single {0}
\begin{document}

\newcounter{myTempCnt}

\newcommand{\Yvecg}{\Yvec_g}
\newcommand{\Xvecg}{\Xvec_g}

\newcommand{\x}[1]{x[#1]}
\newcommand{\y}[1]{y[#1]}

\newcommand{\pdfy}{f_Y(y)}

\newcommand{\ent}[1]{H(#1)}
\newcommand{\diffent}[1]{h(#1)}
\newcommand{\derate}[1]{\bar{h}\left(\mathbf{#1}\right)}
\newcommand{\mutinf}[1]{I(#1)}
\newcommand{\ginf}[1]{I_G(#1)}
\newcommand{\kld}[2]{D(#1||#2)}
\newcommand{\kldrate}[2]{\bar{D}(\mathbf{#1}||\mathbf{#2})}
\newcommand{\kldr}[2]{\bar{D}({#1}||{#2})}
\newcommand{\binent}[1]{H_2(#1)}
\newcommand{\binentneg}[1]{H_2^{-1}\left(#1\right)}
\newcommand{\entrate}[1]{\bar{H}(\mathbf{#1})}
\newcommand{\mutrate}[1]{\mutinf{\mathbf{#1}}}
\newcommand{\redrate}[1]{\bar{R}(\mathbf{#1})}
\newcommand{\pinrate}[1]{\vec{I}(\mathbf{#1})}
\newcommand{\loss}[2][\empty]{\ifthenelse{\equal{#1}{\empty}}{L(#2)}{L_{#1}(#2)}}
\newcommand{\lossrate}[2][\empty]{\ifthenelse{\equal{#1}{\empty}}{L(\mathbf{#2})}{L_{\mathbf{#1}}(\mathbf{#2})}}
\newcommand{\gain}[1]{G(#1)}
\newcommand{\atten}[1]{A(#1)}
\newcommand{\relLoss}[2][\empty]{\ifthenelse{\equal{#1}{\empty}}{l(#2)}{l_{#1}(#2)}}
\newcommand{\relLossrate}[1]{l(\mathbf{#1})}
\newcommand{\relTrans}[1]{t(#1)}
\newcommand{\partEnt}[2]{H^{#1}(#2)}

\newcommand{\dom}[1]{\mathcal{#1}}
\newcommand{\indset}[1]{\mathbb{I}\left({#1}\right)}

\newcommand{\unif}[2]{\mathcal{U}\left(#1,#2\right)}
\newcommand{\chis}[1]{\chi^2\left(#1\right)}
\newcommand{\chir}[1]{\chi\left(#1\right)}
\newcommand{\normdist}[2]{\mathcal{N}\left(#1,#2\right)}
\newcommand{\Prob}[1]{\mathrm{Pr}(#1)}
\newcommand{\Mar}[1]{\mathrm{Mar}(#1)}
\newcommand{\Qfunc}[1]{Q\left(#1\right)}

\newcommand{\expec}[1]{\mathrm{E}\left\{#1\right\}}
\newcommand{\expecwrt}[2]{\mathrm{E}_{#1}\left\{#2\right\}}
\newcommand{\var}[1]{\mathrm{Var}\left\{#1\right\}}
\renewcommand{\det}{\mathrm{det}}
\newcommand{\cov}[1]{\mathrm{Cov}\left\{#1\right\}}
\newcommand{\sgn}[1]{\mathrm{sgn}\left(#1\right)}
\newcommand{\sinc}[1]{\mathrm{sinc}\left(#1\right)}
\newcommand{\e}[1]{\mathrm{e}^{#1}}
\newcommand{\multint}{\iint{\cdots}\int}
\newcommand{\modd}[3]{((#1))_{#2}^{#3}}
\newcommand{\quant}[1]{Q\left(#1\right)}
\newcommand{\card}[1]{\mathrm{card}(#1)}
\newcommand{\diam}[1]{\mathrm{diam}(#1)}

\newcommand{\ivec}{\mathbf{i}}
\newcommand{\hvec}{\mathbf{h}}
\newcommand{\gvec}{\mathbf{g}}
\newcommand{\avec}{\mathbf{a}}
\newcommand{\kvec}{\mathbf{k}}
\newcommand{\fvec}{\mathbf{f}}
\newcommand{\vvec}{\mathbf{v}}
\newcommand{\xvec}{\mathbf{x}}
\newcommand{\Xvec}{\mathbf{X}}
\newcommand{\Xhvec}{\hat{\mathbf{X}}}
\newcommand{\xhvec}{\hat{\mathbf{x}}}
\newcommand{\xtvec}{\tilde{\mathbf{x}}}
\newcommand{\Yvec}{\mathbf{Y}}
\newcommand{\yvec}{\mathbf{y}}
\newcommand{\Zvec}{\mathbf{Z}}
\newcommand{\Svec}{\mathbf{S}}
\newcommand{\Nvec}{\mathbf{N}}
\newcommand{\Pvec}{\mathbf{P}}
\newcommand{\muvec}{\boldsymbol{\mu}}
\newcommand{\wvec}{\mathbf{w}}
\newcommand{\Wvec}{\mathbf{W}}
\newcommand{\Hmat}{\mathbf{H}}
\newcommand{\Amat}{\mathbf{A}}
\newcommand{\Fmat}{\mathbf{F}}

\newcommand{\zerovec}{\mathbf{0}}
\newcommand{\eye}{\mathbf{I}}
\newcommand{\evec}{\mathbf{i}}

\newcommand{\zeroone}{\left[\begin{array}{c}\zerovec^T\\ \eye\end{array} \right]}
\newcommand{\zerooneT}{\left[\begin{array}{cc}\zerovec & \eye\end{array} \right]}
\newcommand{\zerooneM}{\left[\begin{array}{cc}\zerovec &\zerovec^T\\\zerovec& \eye\end{array} \right]}

\newcommand{\Cxx}{\mathbf{C}_{XX}}
\newcommand{\Cx}{\mathbf{C}_{\Xvec}}
\newcommand{\Chx}{\hat{\mathbf{C}}_{\Xvec}}
\newcommand{\Cy}{\mathbf{C}_{\Yvec}}
\newcommand{\Cz}{\mathbf{C}_{\Zvec}}
\newcommand{\Cn}{\mathbf{C}_{\mathbf{N}}}
\newcommand{\Cnt}{\underline{\mathbf{C}}_{\tilde{\mathbf{N}}}}
\newcommand{\Cntm}{\underline{\mathbf{C}}_{\tilde{\mathbf{N}}}}
\newcommand{\Cxh}{\mathbf{C}_{\hat{X}\hat{X}}}
\newcommand{\rxx}{\mathbf{r}_{XX}}
\newcommand{\Cxy}{\mathbf{C}_{XY}}
\newcommand{\Cyy}{\mathbf{C}_{YY}}
\newcommand{\Cnn}{\mathbf{C}_{NN}}
\newcommand{\Cyx}{\mathbf{C}_{YX}}
\newcommand{\Cygx}{\mathbf{C}_{Y|X}}
\newcommand{\Wmat}{\underline{\mathbf{W}}}

\newcommand{\Jac}[2]{\mathcal{J}_{#1}(#2)}

\newcommand{\NN}{{N{\times}N}}
\newcommand{\perr}{P_e}
\newcommand{\perh}{\hat{\perr}}
\newcommand{\pert}{\tilde{\perr}}

\newcommand{\vecind}[1]{#1_0^n}
\newcommand{\roots}[2]{{#1}_{#2}^{(i_{#2})}}
\newcommand{\rootx}[1]{x_{#1}^{(i)}}
\newcommand{\rootn}[2]{x_{#1}^{#2,(i)}}

\newcommand{\markkern}[1]{f_M(#1)}
\newcommand{\pole}{a_1}
\newcommand{\preim}[1]{g^{-1}(#1)}
\newcommand{\preimV}[1]{\mathbf{g}^{-1}[#1]}
\newcommand{\Xmax}{\bar{X}}
\newcommand{\Xmin}{\underbar{X}}
\newcommand{\xmax}{x_{\max}}
\newcommand{\xmin}{x_{\min}}
\newcommand{\limn}{\lim_{n\to\infty}}
\newcommand{\limX}{\lim_{\hat{\Xvec}\to\Xvec}}
\newcommand{\limx}{\lim_{\hat{X}\to X}}
\newcommand{\limXo}{\lim_{\hat{X}_1\to X_1}}
\newcommand{\sumin}{\sum_{i=1}^n}
\newcommand{\finv}{f_\mathrm{inv}}
\newcommand{\ejtheta}{\e{\jmath\theta}}
\newcommand{\khat}{\bar{k}}
\newcommand{\modeq}[1]{g(#1)}
\newcommand{\partit}[1]{\mathcal{P}_{#1}}
\newcommand{\psd}[1]{S_{#1}(\e{\jmath \theta})}
\newcommand{\borel}[1]{\mathfrak{B}(#1)}
\newcommand{\infodim}[1]{d(#1)}

\newcommand{\delay}[2]{\psblock(#1){#2}{\footnotesize$z^{-1}$}}
\newcommand{\Quant}[2]{\psblock(#1){#2}{\footnotesize$\quant{\cdot}$}}
\newcommand{\moddev}[2]{\psblock(#1){#2}{\footnotesize$\modeq{\cdot}$}}

\def\unifc{\lambda}
\def\N{{\mathbb N}}

\newcommand{\PP}{\ensuremath{\mathsf{P}}}
\def\rA#1{\stackrel{{#1}}{\ra}}

\def\ra{\rightarrow}

\def\longrightharpoonup{\relbar\joinrel\rightharpoonup}
\def\longleftharpoondown{\leftharpoondown\joinrel\relbar}

\def\longrightleftharpoons{
  \mathop{
    \vcenter{
      \hbox{
    \ooalign{
      \raise1pt\hbox{$\longrightharpoonup\joinrel$}\crcr
      \lower1pt\hbox{$\longleftharpoondown\joinrel$}
    }
      }
    }
  }
}

\def\deltaT{\Delta}
\newcommand\cons[1]{\nu_{#1}}
\newcommand\prdn[1]{\nu'_{#1}}
\newcommand\gaint[1]{\mu_{#1}}

\newcommand\consvec[1]{\boldsymbol{\nu}_{#1}}
\newcommand\prdnvec[1]{\boldsymbol{\nu}'_{#1}}
\newcommand\gaintvec[1]{\boldsymbol{\mu}_{#1}}

\def\N{{\mathbb N}}

\newcommand{\Rc}[1]{{R_{#1}}}

\newcommand\sprop[1]{\lambda_{#1}}

\newcommand\dprop[1]{\tilde{\lambda}_{#1}}

\newcommand{\E}{\ensuremath{\mathsf{E}}}

\newcommand{\Qvec}{\mathbf{Q}}
\newcommand{\nuvec}{\boldsymbol{\nu}}
\newcommand{\Vvec}{\mathbf{V}}
\newcommand{\Uvec}{\mathbf{U}}
\newcommand{\pivec}{\boldsymbol{\pi}}

\newcommand{\ka}[1]{\mathtt{#1}}
\newcommand{\rateR}[1]{{k_{#1}}}
\newcommand{\srate}[1]{{c_{#1}}}

\def\specSpace{\N^{\specN}}

\newcommand{\conc}[1]{[{#1}]}

\newcommand{\dxvec}{z}
\newcommand{\dxvecT}{Z}

\def\R{\mathbb R}
\newcommand{\specStateSpace}{\mathcal{X}}

\newcommand{\concT}[2]{[{#1}]({#2})}

\newcommand{\spec}[1]{S_{#1}}
\def\specN{n}
\def\reacN{r}

\newcommand{\ctp}[1]{ {{\textcolor{orange}{TP: #1}}} }
\newcommand{\cbg}[1]{ {{\textcolor{blue}{BG: #1}}} }

\def\p{p}
\def\ptilde{\tilde{p}}

\maketitle

\begin{abstract}
%
%
In this paper, we present a method for reducing a regular, discrete-time Markov chain (DTMC) to another DTMC with a given, typically much smaller number of states. The cost of reduction is defined as the Kullback-Leibler divergence rate between a projection of the original process through a partition function and a DTMC on the correspondingly partitioned state space. Finding the reduced model with minimal cost is computationally expensive, as it requires an exhaustive search among all state space partitions, and an exact evaluation of the reduction cost for each candidate partition. Our approach deals with the latter problem by minimizing an upper bound on the reduction cost instead of minimizing the exact cost; The proposed upper bound is easy to compute and it is tight if the original chain is \emph{lumpable} with respect to the partition. Then, we express the problem in the form of information bottleneck optimization, and propose using the agglomerative information bottleneck algorithm for searching a sub-
optimal partition greedily, rather than exhaustively. The theory is illustrated with examples and one application scenario in the context of modeling bio-molecular interactions.
\end{abstract}

\begin{IEEEkeywords}
 Markov chain, lumpability, model reduction, information bottleneck method
\end{IEEEkeywords}

\section{Introduction}

Markov models are ubiquitously used in scientific and engineering disciplines, for example, to understand chemical reaction systems, to model speech recognition and data sources, or in Markov decision processes in automated control. The popularity of these models arises because the Markov property often renders model analysis tractable and their simulation efficient. However, sometimes the state space of a Markov model (i.e., its alphabet) is too large to permit simulation, even when harnessing today's computing power. 
Indeed, in stochastic modeling in computational biology~\cite{Wilkinson_SystemsBiology}, or in $n$-gram word models in speech recognition~\cite{Manning_NLP}, dealing with the state space explosion is a major challenge. Also in control theory, particularly for nearly completely decomposable Markov chains, state space reduction is an important topic~\cite{Meyn_MarkovAggregation,Aldhaheri_NCDMC}.

A direct way of reducing the state space of a Markov chain is aggregation: With the help of a partition function, groups of nodes in the original transition graph are aggregated, resulting in a graph with a smaller number of nodes. The aggregated process, or \emph{aggregation}, can be any Markov chain over this smaller transition graph, depending on how the transition probabilities are chosen. Another way of reducing the state space of a Markov chain is to project realizations of the original chain through the partition function. The process thus obtained is called the projected process, or \emph{projection}. Ideally, for a given partition function, aggregated and projected process should coincide. However, as the projected process is generally not Markov, the aggregation ``closest'' to the projection is sought instead, where closeness has to be defined appropriately. In this paper, we quantify the distance between the projection and the aggregation by the Kullback-Leibler divergence rate (KLDR).

We focus on finding the optimal aggregation for a given alphabet size, i.e., on finding the partition function for which the KLDR between the projection and aggregation is minimized. 
Although, for a given partition function, the aggregation closest to the projection is easy to obtain (cf.~\cite{Meyn_MarkovAggregation} or Lemma~\ref{lem:optimalY} in this work), finding the optimal partition function remains computationally expensive because: 
(i) it requires an exhaustive search among all partitions of a given alphabet size, and 
(ii) it requires evaluating the KLDR for each candidate partition. 
In our approach, we relax the latter problem to evaluating an upper bound on the KLDR. More precisely, the aggregated Markov chain is \emph{lifted} to the original alphabet; the KLDR between the lifted and the original Markov chain provides an upper bound which can be evaluated analytically~\cite{Rached_KLDR,Meyn_MarkovAggregation}. Further relaxing the problem allows its expression in terms of the information bottleneck method~\cite{Tishby_InformationBottleneck}, casting the problem of state space reduction in terms of a widely used machine learning technique. As a result, we propose using the information bottleneck method for finding a sub-optimal partition function in a greedy manner, thus obviating the complexity of an exhaustive search for the cost of optimality.

\subsection{Contributions and Related Work} 
In control theory, model reduction and, in particular, state space aggregation of Markov models is an important topic. For example, White et al. analyzed aggregation of Markov and hidden Markov models in~\cite{White_HMMLumpable}. In particular, they presented a linear algebraic condition for lumpable chains (see Definition~\ref{def:lump} on page~\pageref{def:lump}) and determined, for a given partition function, the best aggregation in terms of the Frobenius norm. Given the transition matrix of a Markov chain, they obtained a bi-partition of its state space via alternating projection. Aldhaheri and Khalil considered optimal control of nearly completely decomposable Markov chains and adapted Howard's algorithm to work on an aggregated model~\cite{Aldhaheri_NCDMC}. The work of Jia considers state aggregation of Markov decision processes optimal w.r.t. the value function and provides algorithms which perform this aggregation~\cite{Jia_MDP}. Aggregation of Markov chains with information-theoretic cost functions 
was considered by 
Deng et al.~\cite{Meyn_MarkovAggregation} and Vidyasagar~\cite{Vidyasagar_MarkovAgg}, the first reference being the main inspiration of our work. Deng and Huang used the KLDR as a cost function to obtain a low-rank approximation of the original transition matrix via nuclear-norm regularization, thus preserving the cardinality of the state space~\cite{Deng_LowRank}. 

The idea of lifting the aggregated chain to the original state space, was used in, e.g., Deng et al.~\cite{Meyn_MarkovAggregation} and Katsoulakis et al.~\cite{Katsoulakis_CoarseGraining}. In~\cite{Katsoulakis_CoarseGraining}, the authors realized that the Kullback-Leibler divergence between the resulting Markov chains provides an upper bound on the reduction cost; however, their work is focused on continuous-time Markov chains, which makes a detailed comparison with our work difficult. Compared to~\cite{Meyn_MarkovAggregation}, our approach differs in the definition of the lifting and its consequences. More precisely, the lifting we use incorporates the one-step transition probabilities of the original chain, while the authors of~\cite{Meyn_MarkovAggregation} define lifting based only on the stationary distribution of the original chain.
Consequently, while Deng et al. maximize the redundancy of the aggregated Markov chain, the lifting proposed here minimizes \emph{information loss} in a well-defined sense.
Moreover, the upper bound we obtain is better than the upper bound obtained in~\cite{Meyn_MarkovAggregation}, and it is tight if the original chain is lumpable.

The connection to spectral graph theory observed in~\cite{Meyn_MarkovAggregation} does not apply in our case, to the best of our knowledge. More precisely, for nearly completely decomposable Markov chains, the optimal bi-partition of the alphabet is determined by the sign structure of the Fiedler vector, the eigenvector associated with the second largest eigenvalue. Despite spectral graph theory being employed for model reduction and Markov chain aggregation for some time (e.g.,~\cite{Meila_Segmentation,Runolfsson_ModelReduction}), the authors of~\cite{Meyn_MarkovAggregation} first showed a connection between this eigenvector-based aggregation method and an information-theoretic cost function.

In summary, by introducing a different lifting, we lose the connection to eigenvector-based aggregation, but instead gain the following:
\begin{enumerate}
 \item Our lifting minimizes an upper bound on the KLDR between the projection and the aggregation, subject to the requirement that the lifted chain is lumpable.
 \item The upper bound we obtain is tight if the original chain is lumpable.
 \item Minimizing the upper bound proposed by our lifting minimizes information loss in a well-defined sense; this minimization, loosely speaking, yields the partition w.r.t. which the original chain is ``most lumpable''.
 \item Relaxating the cost function allows applying the information bottleneck method for state space aggregation.
\end{enumerate}
The connection to the information bottleneck method is most interesting: Recently, Vidyasagar investigated a metric between distributions on sets of different cardinalities~\cite{Vidyasagar_Distribution}, a problem very similar to the one considered in this work. He proposed an information-theoretic metric called the variation of information, and showed that the optimal reduced-order distribution on a set of given cardinality is obtained by \emph{projecting} the original distribution. Specifically, the reduced-order distribution should have maximal entropy, which is equivalent to requiring that the partition function induces the minimum information loss; a sub-optimal solution to this problem is given by the information bottleneck method, cf.~\cite{Geiger_Relevant_arXiv}.

We furthermore provide new insight in some aspects of the lifting proposed in~\cite{Meyn_MarkovAggregation}:
\begin{enumerate}
 \item The KLDR between the original Markov chain and its aggregation, as defined in~\cite{Meyn_MarkovAggregation}, is an upper bound on the KLDR between the projection and the aggregation.
 \item Following~\cite{Kemeny_FMC}, we introduce a compact matrix notation for the lifting introduced in~\cite{Meyn_MarkovAggregation}, allowing us to provide new proofs for some of the results shown there.
\end{enumerate}
                                                                                                                                                        
In works related to (graph) clustering, information-theoretic cost functions are often used for error quantification.
In particular, in~\cite{Raj_Clustering}, the authors use the information bottleneck method for partitioning a graph via assuming continuous-time graph diffusion. 
Moreover, in~\cite{Tishby_MarkovRelax} and~\cite{Friedman_Clustering} pairwise distance measures between data points were used to define a stationary Markov chain, whose statistics are then used for clustering the data points. While~\cite{Tishby_MarkovRelax} applies the information bottleneck method and obtains a result very similar to ours, the authors do not describe its importance for Markov chain aggregation. 
In~\cite{Friedman_Clustering}, the authors employ the same cost function as~\cite{Meyn_MarkovAggregation} and present an iterative algorithm similar to the agglomerative information bottleneck method~\cite{Slonim_AgglomIB}.
While their work focuses on pairwise clustering, they conclude by stating that their results can be employed for Markov chain aggregation. Most recently, the authors of~\cite{Xu_DA} proposed graph clustering by defining a dissimilarity function between the original and the aggregated graph and subsequently applying deterministic annealing to find the best clustering. They define a composite graph to cope with the problem of comparing graphs with different sizes and apply their results to Markov chain aggregation using KLDR as a dissimilarity measure.

Although the present work focuses on stationary Markov chains, we conjecture that our results can be generalized to time-homogeneous Markov chains with a starting distribution different from the invariant distribution, since they are still \emph{stationary in the asymptotic mean}, or AMS~\cite{Kieffer_MarkovChannelsAMS}. A more detailed discussion of AMS can be found in~\cite{Gray_Probability}.

The extension to \emph{stochastic aggregations}, i.e., an aggregation where the state space reduction is not performed by a deterministic partition function, but rather by a stochastic mapping, is not immediately possible, at least to the best of our knowledge. While a result similar to our Lemma~\ref{lem:kldbound} should hold also for stochastic mappings (cf.~\cite[Ch.~4.4]{Cover_Information2}), it is not clear how the stochastic aggregation should be lifted to a Markov chain on the original state space. Since deterministic mappings are preferable for their simplicity, we leave the topic of stochastic aggregations for future investigation.

\subsection{Outline of the Paper}
We start by introducing notation and information-theoretic quantities (Sections~\ref{ssec:RVs} and~\ref{ssec:ITQuant}) and their application to Markov chains (Section~\ref{ssec:Markov}) and functions of Markov chains (Section~\ref{ssec:MarkovFunction}; introducing also the notion of lumpability). Turning to the problem of state space aggregation in Section~\ref{sec:originalApproach}, we restate results linked to the lifting method proposed by~\cite{Meyn_MarkovAggregation} (Section~\ref{sec:pivecLifting}) and present an alternative and its properties (Section~\ref{sec:Plifting}). Section~\ref{sec:infoloss} connects the proposed lifting method to the notion of relevant information loss recently introduced in~\cite{Geiger_Relevant_arXiv}; we exploit this connection in Section~\ref{ssec:ibmethod} to show how the information bottleneck method can be employed for state space reduction. A few small examples are contained in Section~\ref{sec:examples}. The final section, 
Section~\ref{sec:bioexample}, is devoted to a biologically inspired example.

\section{Notation, Preliminaries, and Setup}\label{sec:prelim}
\subsection{Random Variables and Stochastic Processes}\label{ssec:RVs}
Let $(\Omega,\mathfrak{B},\mathrm{Pr})$ denote the probability space on which all random variables (RVs) and stochastic processes are defined. We denote RVs by upper case letters, e.g., $Z$, their (finite) alphabet by calligraphic letters, e.g., $\dom{Z}$, and realizations by lower case letters, e.g., $z$, where $z\in\dom{Z}$. For an index set $\mathbb{I}\subset\mathbb{N}$ with finite cardinality $\card{\mathbb{I}}$, let $Z_\mathbb{I}:=\{Z_i\}_{i\in\mathbb{I}}$; in particular, we abbreviate $Z_m^n:=\{Z_{m},Z_{m+1},\dots,Z_n\}$. The probability mass function (PMF) of $Z$ is denoted by $p_Z$, where
\begin{equation}
 \forall z\in\dom{Z}{:} \quad p_Z(z) := \Prob{Z=z}.
\end{equation} 
The joint PMF $p_{Z_\mathbb{I}}$ of $Z_\mathbb{I}$ and the conditional PMF $p_{Z_\mathbb{I}|Z_\mathbb{J}}$ of $Z_\mathbb{I}$ given $Z_\mathbb{J}$ are defined similarly.

In this work, discrete-time, one-sided random process are denoted by bold-faced letters, e.g., $\Zvec$, and their (random) samples are indexed by the set of natural numbers, i.e., $\{Z_1,Z_2,\dots\}$. We assume each RV $Z_i$ takes values from the same, finite, alphabet $\dom{Z}$. The random processes considered in this work are \emph{stationary}. In particular, the marginal distribution of $Z_k$ is equal for all $k$ and shall be denoted as $p_Z$.

\subsection{Information-Theoretic Quantities}\label{ssec:ITQuant}
In the remainder of this work we will need
\begin{definition}[Information-Theoretic Quantities{~\cite[Ch,~2 \& 4]{Cover_Information2}}]\label{def:ITQuant}
The \emph{(joint) entropy} of a collection of RVs $Z_\mathbb{I}$, the \emph{conditional entropy} of $Z_\mathbb{I}$ given $Z_\mathbb{J}$, and the \emph{mutual information} between $Z_\mathbb{I}$ and $Z_\mathbb{J}$ are
\begin{subequations}
\begin{align}
 \ent{Z_\mathbb{I}}&:=-\sum_{z_\mathbb{I}\in\dom{Z}^{\card{\mathbb{I}}}} p_{Z_\mathbb{I}}(z_\mathbb{I})\log p_{Z_\mathbb{I}}(z_\mathbb{I})\\
 \ent{Z_\mathbb{I}|Z_\mathbb{J}} &:=\ent{Z_{\mathbb{I}\cup\mathbb{J}}}-\ent{Z_\mathbb{J}}=\ent{Z_{\mathbb{I}},Z_\mathbb{J}}-\ent{Z_\mathbb{J}}\\
\mutinf{Z_\mathbb{I};Z_\mathbb{J}}&:=\ent{Z_\mathbb{J}}+\ent{Z_\mathbb{I}}-\ent{Z_{\mathbb{I}},Z_\mathbb{J}}.
\end{align}

The \emph{entropy rate} and the \emph{redundancy rate} of a stationary stochastic process $\Zvec$ are
\begin{align}
 \entrate{Z}&:=\limn\frac{1}{n} \ent{Z_1^{n}} = \limn \ent{Z_n|Z_{1}^{n-1}}\\
 \redrate{Z}&:=\ent{Z}-\entrate{Z}\stackrel{(a)}{\geq}0
\end{align}
\end{subequations}
where $\ent{Z}$ is the entropy of the marginal distribution of $\Zvec$ and where $(a)$ is due to the fact that conditioning reduces entropy~\cite[Thm.~2.6.5]{Cover_Information2}. 
\end{definition}

The redundancy rate is a measure of statistical dependence between the current sample and its past: For a process of independent, identically distributed RVs, $\entrate{Z}=\ent{Z}$ and $\redrate{Z}=0$. Conversely, for a completely predictable process, $\entrate{Z}=0$ and $\redrate{Z}=\ent{Z}$. In other words, the higher the redundancy rate, the lower the entropy rate and, thus, the less information is conveyed by the process in each time step.

We need another definition for the development of our results:
\begin{definition}[Kullback-Leibler Divergence Rate]\label{def:KLDR}
 The Kullback-Leibler divergence rate (KLDR) between two stationary stochastic processes $\Zvec$ and $\Zvec'$ on the same finite alphabet $\dom{Z}$ is~\cite[Ch.~10]{Gray_Entropy}
\ifthenelse{\single=1}{
\begin{equation}
 \kldrate{Z}{Z'}:=\limn \frac{1}{n} \kld{Z_1^n}{{Z'}_1^{n}}
= \limn \frac{1}{n}\sum_{z_1^n\in\dom{Z}^n} p_{Z_1^n}(z_1^n) \log \frac{p_{Z_1^n}(z_1^n)}{p_{{Z'}_1^n}(z_1^n)}
\end{equation}
}
{\begin{align}
 \kldrate{Z}{Z'}&:=\limn \frac{1}{n} \kld{Z_1^n}{{Z'}_1^{n}}\notag\\
&= \limn \frac{1}{n}\sum_{z_1^n\in\dom{Z}^n} p_{Z_1^n}(z_1^n) \log \frac{p_{Z_1^n}(z_1^n)}{p_{{Z'}_1^n}(z_1^n)}
\end{align}}
whenever the limit exists and if $p_{Z_1^n}\ll  p_{{Z'}_1^n}$ for all $n$, i.e., if for all $n$ and all $z_1^n$,
\begin{equation}
   p_{{Z'}_1^n}(z_1^n) = 0 \Rightarrow p_{Z_1^n}(z_1^n) = 0.
\end{equation}
\end{definition}

The limit exists, e.g., between a stationary stochastic process and a time-homogeneous Markov chain~\cite{Gray_Entropy} as well as between Markov chains (not necessarily stationary or irreducible)~\cite{Rached_KLDR}. Roughly speaking, the KLDR between a process $\Zvec$ and its model $\Zvec'$ quantifies the number of bits necessary per time step to correct the model distribution to arrive at the true process distribution.

\subsection{Markov Chains}\label{ssec:Markov}
Let $\Xvec$ be a regular, i.e., irreducible and aperiodic, time-homogeneous Markov chain on the finite alphabet $\dom{X}=\{1,\dots,N\}$ (see~\cite{Kemeny_FMC} for terminology). Its behavior is uniquely determined by its transition matrix $\Pvec=\{P_{ij}\}$, where $ P_{ij} := \Prob{X_n=j|X_{n-1}=i}$. The unique invariant distribution vector $\muvec$ with its $i$-th component given by
\begin{equation}
 \mu_i:=p_X(i)=\Prob{X_k=i}>0
\end{equation} 
satisfies $\muvec^T=\muvec^T\Pvec$~\cite[Thm.~4.1.6]{Kemeny_FMC}. For such a Markov chain we use the shorthand notation $\Xvec\sim\Mar{\dom{X},\Pvec,\muvec}$. We assume furthermore that $\Xvec$ is stationary, i.e., its initial distribution coincides with the invariant distribution. 

With Definition~\ref{def:ITQuant}, the entropy and the entropy rate of $\Xvec$, as well as the KLDR between two Markov chains $\Xvec$ and $\Xvec'$ on the same alphabet $\dom{X}$ with transition matrices $\Pvec$ and $\Pvec'$ are~\cite[p.~77]{Cover_Information2},~\cite{Rached_KLDR}
\begin{align}
  \ent{X} &= -\sum_{i\in\dom{X}} \mu_i\log\mu_i\\
  \entrate{X} &= \ent{X_1|X_0} = - \sum_{i,j\in\dom{X}} \mu_i P_{ij}\log P_{ij}\label{eq:entX}\\
\kldrate{X}{X'}&= \sum_{i,j\in\dom{X}} \mu_i P_{ij}\log\frac{P_{ij}}{P'_{ij}} \label{eq:kldMarkov}
\end{align}
respecively, provided that $P'_{ij}=0$ implies $P_{ij}=0$ ($\Pvec\ll\Pvec'$).

\subsection{Functions of Markov Chains}\label{ssec:MarkovFunction}
We partition the alphabet $\dom{X}$ of the Markov chain $\Xvec$ by a surjective function $g{:}\ \dom{X}\to\dom{Y}$, where $\dom{Y}=\{1,\dots,M\}$. In other words, $g$ induces a partition of $\dom{X}$ by the preimages\footnote{Given a state $j\in \dom{Y}$, with slight abuse of notation we write $\preim{j}$ for the preimage of $j$ under $g$, that is, $\preim{j}:=\preim{\{j\}}=\{i\in\dom{X}\mid g(i)=j\}$.} of the elements of $\dom{Y}$. Projecting $\Xvec$ through the function, i.e., $Y_n:=g(X_n)$, defines another stochastic process $\Yvec$; in what follows we call this process the \emph{projected process}, or simply the \emph{projection} of $\Xvec$ (see Fig.~\ref{fig:connection}).

If $\Xvec$ is stationary, then so is $\Yvec$. The following inequalities 
\begin{IEEEeqnarray}{RCL}
 \ent{Y}&\leq&\ent{X}\\\entrate{Y}&\leq&\entrate{X}\\\redrate{Y}&\leq&\redrate{X}
\end{IEEEeqnarray}
hold by the data processing inequality~\cite{Cover_Information2,Pinsker_InfoEngl} and by~\cite{Watanabe_InfoLoss}.

It is well known that $\Yvec$ is not necessarily Markov. The case where $\Yvec$ is a regular, time-homogeneous Markov chain, gives rise to the notion of lumpability: 
\begin{definition}[Lumpability~\cite{Kemeny_FMC}]\label{def:lump}
 A Markov chain $\Xvec\sim\Mar{\dom{X},\Pvec,\muvec}$ is \emph{lumpable} w.r.t. a function $g$, iff the process $\Yvec$ is a regular, time-homogeneous Markov chain with alphabet $\dom{Y}$, transition matrix $\Qvec$, and invariant distribution $\nuvec$, i.e., iff $\Yvec\sim\Mar{\dom{Y},\Qvec,\nuvec}$, for every initial distribution of $\Xvec$.
\end{definition}

In order to present conditions under which a Markov chain is lumpable, we need the following matrices: Let $\Vvec$ be an $N\times M$ matrix with $V_{ij}:=1$ if $i\in\preim{j}$ and zero otherwise (thus, every row contains exactly one 1). Furthermore, $\Uvec^{\pivec}$ is an $M\times N$ matrix with zeros in the same positions as $\Vvec^T$, but with otherwise positive row entries which sum to one. In other words, with $\pivec$ being a positive probability vector,
\begin{equation}\label{eq:umatdef}
 U^{\pivec}_{ij}:=
\begin{cases}
 \frac{\pi_j}{\sum_{k\in\preim{i}}\pi_k}, &\text{ if } j\in\preim{i}\\
 0, & \text{else}
\end{cases}.
\end{equation}

\newcommand{\zetavec}{\boldsymbol{\zeta}}
\begin{lem}[Conditions for Lumpability]\label{lem:lump} \rm
 A stationary Markov chain $\Xvec\sim\Mar{\dom{X},\Pvec,\muvec}$ is lumpable w.r.t. $g$ iff for every positive probability vector $\zetavec$
\begin{equation}\label{eq:condstrong}
  \Vvec\Uvec^{\zetavec}\Pvec\Vvec = \Pvec\Vvec. 
\end{equation}
Then, $\Yvec\sim\Mar{\dom{Y},\Qvec,\nuvec}$ with $\nuvec^T=\muvec^T\Vvec$ and
\begin{equation}
 \Qvec=\Uvec^{\muvec}\Pvec\Vvec.
\end{equation}
\end{lem}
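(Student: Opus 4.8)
The plan is to prove both directions of the equivalence by connecting the matrix identity~\eqref{eq:condstrong} to the classical Kemeny–Snell characterization of lumpability, namely that $\Xvec$ is lumpable w.r.t.\ $g$ iff for every pair of blocks $B_k=\preim{k}$ and $B_l=\preim{l}$ the quantity $\sum_{j\in B_l} P_{ij}$ is the same for all $i\in B_k$; call this common value $Q_{kl}$. The key observation is a dictionary between matrix operations and block structure: right-multiplication by $\Vvec$ sums columns within blocks, so $(\Pvec\Vvec)_{il}=\sum_{j\in\preim{l}}P_{ij}$; left-multiplication by $\Uvec^{\zetavec}$ forms, for each block, a convex combination (weighted by $\zetavec$) of the rows belonging to that block. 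Thus $(\Uvec^{\zetavec}\Pvec\Vvec)_{kl}=\sum_{i\in\preim{k}} \frac{\zeta_i}{\sum_{m\in\preim{k}}\zeta_m}\left(\sum_{j\in\preim{l}}P_{ij}\right)$, and $\Vvec\Uvec^{\zetavec}\Pvec\Vvec$ broadcasts that block-value back to every row $i$ in block $k$.

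First I would show sufficiency of~\eqref{eq:condstrong}: fix any two positive probability vectors $\zetavec$, $\zetavec'$; subtracting the two instances of~\eqref{eq:condstrong} gives $\Vvec(\Uvec^{\zetavec}-\Uvec^{\zetavec'})\Pvec\Vvec=\zerovec$. Reading this off block by block, for each block $k$ the weighted row-average of the matrix $\Pvec\Vvec$ restricted to rows in $\preim{k}$ must be independent of the weights; since the weights range over all strictly positive convex combinations, this forces all those rows of $\Pvec\Vvec$ to be equal, i.e.\ $\sum_{j\in\preim{l}}P_{ij}$ is constant over $i\in\preim{k}$ for every $l$. That is exactly the Kemeny–Snell condition, so $\Yvec$ is a time-homogeneous Markov chain for every initial distribution; regularity of $\Yvec$ follows because $\Xvec$ is regular and $g$ is surjective (an aperiodic recurrence structure is inherited by the image). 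Conversely, if $\Xvec$ is lumpable then $\sum_{j\in\preim{l}}P_{ij}=Q_{kl}$ for all $i\in\preim{k}$, so $\Pvec\Vvec=\Vvec\Qvec$ for the $M\times M$ matrix $\Qvec=(Q_{kl})$; then for any $\zetavec$, $\Vvec\Uvec^{\zetavec}\Pvec\Vvec=\Vvec\Uvec^{\zetavec}\Vvec\Qvec=\Vvec\Qvec=\Pvec\Vvec$, using the immediate identity $\Uvec^{\zetavec}\Vvec=\eye_M$ (each block's weights sum to one).

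Finally I would identify the transition matrix and invariant distribution. From $\Pvec\Vvec=\Vvec\Qvec$ and $\Uvec^{\muvec}\Vvec=\eye_M$ we get $\Qvec=\Uvec^{\muvec}\Pvec\Vvec$; it remains to check this is stochastic (rows sum to one, since $\Pvec$ and $\Vvec$ act as $\Pvec\mathbf{1}_N=\mathbf{1}_N$ and $\Vvec$ has a single $1$ per row, while $\Uvec^{\muvec}$ has rows summing to one) and that it governs $\Yvec$ (the one-step conditional law of $Y_n$ given $Y_{n-1}=k$ is obtained by averaging $P_{ij}$ over $i\in\preim{k}$ weighted by the stationary conditional distribution $\mu_i/\sum_{m\in\preim{k}}\mu_m$, which is exactly row $k$ of $\Uvec^{\muvec}\Pvec$, then grouped by $\Vvec$). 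For the invariant distribution, $\nuvec^T=\muvec^T\Vvec$ records the block masses $\nu_k=\sum_{i\in\preim{k}}\mu_i=\Prob{Y_k=k}$, and one verifies $\nuvec^T\Qvec=\muvec^T\Vvec\Uvec^{\muvec}\Pvec\Vvec=\muvec^T\Pvec\Vvec=\muvec^T\Vvec=\nuvec^T$, where the crucial middle step $\muvec^T\Vvec\Uvec^{\muvec}=\muvec^T$ holds because $\Vvec\Uvec^{\muvec}$ fixes $\muvec^T$ (block $k$ of $\muvec^T\Vvec\Uvec^{\muvec}$ redistributes the block mass $\nu_k$ back in proportions $\mu_i/\nu_k$).

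I expect the main obstacle to be the sufficiency direction — specifically the step where the equality of the $\zetavec$-weighted averages across all positive weight vectors is parlayed into equality of the individual rows of $\Pvec\Vvec$ within each block. This is where one must argue carefully (e.g.\ by choosing weight vectors concentrating on two indices at a time, or by a dimension/convexity argument) rather than merely manipulate matrices; everything else reduces to the bookkeeping identities $\Uvec^{\zetavec}\Vvec=\eye_M$, $\Pvec\mathbf{1}=\mathbf{1}$, and $\muvec^T\Vvec\Uvec^{\muvec}=\muvec^T$. A secondary subtlety worth stating explicitly is why regularity of $\Yvec$ is inherited, so that Definition~\ref{def:lump} is fully met and not just time-homogeneity.
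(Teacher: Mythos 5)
Your proof is correct, but it takes a genuinely different route from the paper: the paper's proof of this lemma is essentially a citation to Kemeny and Snell (Thm.~6.3.5 and Example~6.3.3) for the equivalence, followed only by the one-line verification $\nuvec^T=\nuvec^T\Qvec=\nuvec^T\Uvec^{\muvec}\Pvec\Vvec=\muvec^T\Pvec\Vvec=\muvec^T\Vvec$. You instead re-derive the Kemeny--Snell row-sum characterization ($\sum_{j\in\preim{l}}P_{ij}$ constant over $i\in\preim{k}$, which the paper itself only invokes later, in Appendix~B) directly from the matrix identity, using the dictionary $(\Pvec\Vvec)_{il}=\sum_{j\in\preim{l}}P_{ij}$, $\Uvec^{\zetavec}\Vvec=\eye$, and the broadcasting action of left-multiplication by $\Vvec$. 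This buys a self-contained argument entirely in the paper's matrix notation, at the cost of redoing classical material; the paper's approach is shorter but leaves the reader to unpack the cited theorem. One remark on the step you flag as the main obstacle: it is easier than you think, and no variation of $\zetavec$ is needed. For any \emph{single} positive $\zetavec$, the matrix $\Vvec\Uvec^{\zetavec}\Pvec\Vvec$ has, by construction, identical rows for all $i$ in a given block $\preim{k}$ (each equals row $k$ of $\Uvec^{\zetavec}\Pvec\Vvec$); equating it with $\Pvec\Vvec$ therefore forces all rows of $\Pvec\Vvec$ indexed by $\preim{k}$ to coincide with that common weighted average, hence with each other. The ``for every $\zetavec$'' quantifier is then only needed in the converse direction, where it is immediate from $\Uvec^{\zetavec}\Vvec=\eye$. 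Your verification of $\Qvec=\Uvec^{\muvec}\Pvec\Vvec$ and of the invariance $\nuvec^T\Qvec=\nuvec^T$ matches the paper's computation (you run it from the definition $\nuvec^T:=\muvec^T\Vvec$ toward invariance rather than the reverse, which is if anything cleaner); the only blemish is the typo $\Prob{Y_k=k}$ for $\Prob{Y_n=k}$.
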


\begin{proof}
See~\cite[Thm.~6.3.5 \& Example~6.3.3]{Kemeny_FMC} and note that
\begin{equation}
 \nuvec^T=\nuvec^T\Qvec = \nuvec^T\Uvec^{\muvec}\Pvec\Vvec = \muvec^T\Pvec\Vvec=\muvec^T\Vvec.
\end{equation}
\end{proof}

The corresponding result for continuous-time Markov chains on a countable alphabet has been proven in~\cite[Thm.~2]{Petrov_Lumpability}. 

While the KLDR between two Markov chains is easy to compute, for the KLDR between a (non-Markov) function of a Markov chain and Markov chain no closed-form solution is available. In special cases, however, the former can act as an upper bound on the latter, provided the Markov chains are chosen appropriately. We make this precise in

\begin{lem}\label{lem:kldbound} \rm
Let $\Xvec$ and $\Xvec'$ be stationary, time-homogeneous, regular Markov chains on the same alphabet $\dom{X}$ with transition matrices $\Pvec$ and $\Pvec'$. Let $\Pvec'\gg \Pvec$. We define two processes $\Yvec$ and $\Yvec'$ by $Y_n:=g(X_n)$ and $Y'_n:=g(X'_n)$, $g{:}\ \dom{X}\to\dom{Y}$. Let additionally $\Xvec'$ be lumpable w.r.t. $g$. We have
 \begin{equation}
  \kldrate{X}{X'} \geq \kldrate{Y}{Y'}.
 \end{equation}
\end{lem}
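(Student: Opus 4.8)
The plan is to invoke the ordinary data-processing inequality for Kullback--Leibler divergence at every finite horizon $n$, and then normalize by $n$ and pass to the limit. The only delicate point --- and the main obstacle --- is that the limit defining $\kldrate{Y}{Y'}$ is not automatic, since $\Yvec$ need not be a Markov chain; this is precisely where the hypothesis that $\Xvec'$ is lumpable w.r.t.\ $g$ enters.

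First I would dispose of the well-definedness issues. Since $\Xvec$ and $\Xvec'$ are regular, their invariant distributions are strictly positive, so $\muvec\ll\muvec'$; combined with $\Pvec\ll\Pvec'$ this gives $p_{X_1^n}\ll p_{{X'}_1^{n}}$ for every $n$, hence $\kldrate{X}{X'}$ exists and equals the closed form~\eqref{eq:kldMarkov}. By Lemma~\ref{lem:lump}, lumpability of $\Xvec'$ makes $\Yvec'$ a regular, time-homogeneous Markov chain, while $\Yvec$ is stationary; hence the divergence rate between $\Yvec$ and $\Yvec'$ falls under the case of a stationary source compared against a time-homogeneous Markov model, for which the limit is known to exist~\cite{Gray_Entropy} --- provided the absolute-continuity requirement of Definition~\ref{def:KLDR} holds. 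The latter follows from $p_{X_1^n}\ll p_{{X'}_1^{n}}$: if $p_{{Y'}_1^{n}}(y_1^n)=0$, then the sum of $p_{{X'}_1^{n}}(x_1^n)$ over all $x_1^n$ with $g(x_k)=y_k$ for $k=1,\dots,n$ vanishes, hence each summand vanishes, hence each corresponding $p_{X_1^n}(x_1^n)$ vanishes, and therefore $p_{Y_1^n}(y_1^n)=0$.

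Next, for each fixed $n$ the vector $Y_1^n$ is the image of $X_1^n$ under the coordinatewise application of $g$, and ${Y'}_1^{n}$ is the image of ${X'}_1^{n}$ under the \emph{same} map. The data-processing inequality for relative entropy --- equivalently, the log-sum inequality applied to the partition of $\dom{X}^n$ into the level sets of this map, see~\cite[Ch.~2]{Cover_Information2} --- then gives $\kld{X_1^n}{{X'}_1^{n}}\geq\kld{Y_1^n}{{Y'}_1^{n}}$. Dividing by $n$ and letting $n\to\infty$, the left-hand side tends to $\kldrate{X}{X'}$ and, by the discussion above, the right-hand side tends to $\kldrate{Y}{Y'}$; since inequalities are preserved under limits, the claim follows. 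Had $\Xvec'$ not been lumpable, the same argument would still give $\limsup_n \frac1n\kld{Y_1^n}{{Y'}_1^{n}}\le\kldrate{X}{X'}$, but without a bona fide Markov model for $\Yvec'$ one could not identify this $\limsup$ with a divergence rate --- which is exactly why lumpability of $\Xvec'$ is assumed.
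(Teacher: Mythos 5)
Your proposal is correct and follows essentially the same route as the paper's proof: the finite-horizon data-processing inequality $\kld{X_1^n}{{X'}_1^{n}}\geq\kld{Y_1^n}{{Y'}_1^{n}}$ combined with existence of both limits, where $\kldrate{X}{X'}$ exists by $\Pvec'\gg\Pvec$ and $\kldrate{Y}{Y'}$ exists because lumpability makes $\Yvec'$ a regular Markov chain and absolute continuity is inherited at the $\Yvec$ level. Your explicit verification of that inherited absolute continuity is a detail the paper only asserts, but the argument is the same.
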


\begin{proof}
The inequality follows from the fact that the Kullback-Leibler divergence reduces under measurements (e.g.,~\cite[Cor.~3.3]{Gray_Entropy} or~\cite[Ch.~2.4]{Pinsker_InfoEngl}), i.e., that for all $n$,
\begin{equation}
\kld{X_1^n}{{X'}_1^{n}} \geq \kld{Y_1^n}{{Y'}_1^{n}}.
\end{equation}
It thus remains to show that the limits exist.

Since $\Pvec'\gg \Pvec$, $\kldrate{X}{X'}$ exists and equals~\eqref{eq:kldMarkov}, cf.~\cite{Rached_KLDR}. Since $\Xvec'$ is lumpable, $\Yvec'$ is a regular, time-homogeneous Markov chain. Moreover, from $\Pvec'\gg \Pvec$ it follows that the process distribution of $\Yvec$ is absolutely continuous w.r.t. the process distribution of $\Yvec'$. This ensures the existence of $\kldrate{Y}{Y'}$~\cite[Lem.~10.1]{Gray_Entropy} and completes the proof.
\end{proof}

\section{Problem Statement} 
\label{sec:originalApproach}
\begin{figure}[t] 
\centering
  \begin{pspicture}[showgrid=false](-1,0.5)(6,6)
    \psset{style=Arrow}
    \pssignal(0,5){x}{$\Xvec\sim\Mar{\dom{X},\Pvec,\muvec}$}
    \pssignal(0,1){y}{$\Yvecg$}
    \pssignal(5,1){yp}{$\Yvecg'\sim\Mar{\dom{Y},\Qvec,\nuvec}$}
    \pssignal(5,5){xp}{$\Xvec'\sim\Mar{\dom{X},\Pvec',\muvec'}$}
    \ncline{<-}{y}{x} \aput{:U}{Projection $g$}
    \ncline[style=Dash]{x}{yp}\aput{:U}{Aggregation}
    \pnode(4.8,1.4){yp1}\pnode(4.8,4.6){xp1}\pnode(5.2,1.4){yp2}\pnode(5.2,4.6){xp2}
    \ncline{yp1}{xp1}\bput[-15pt]{:U}{lifting}
    \ncline{<-}{yp2}{xp2}\bput[5pt]{:U}{Projection $g$}
    \ncline[style=Dash]{<->}{y}{yp}\Aput{$\bar{D}(\Yvecg || \Yvecg')$}
    \ncline[style=Dash]{<->}{x}{xp}\Aput{$\bar{D}(\Xvec || \Xvec')$}
\end{pspicture}
\caption{Illustration of the problem: Assume a Markov chain $\Xvec$ is given. We are interested in finding an aggregation of $\Xvec$, i.e., a Markov chain $\Yvecg'$ on a partition of the alphabet of $\Xvec$. This partition defines a function $g$ (and vice-versa), which allows us to define a process $\Yvecg$ (via $Y_{g,n}:=g(X_n)$), the projection of $\Xvec$. Note that $\Yvecg$ need not be Markov. Lifting $\Yvecg'$ yields a Markov chain on the original alphabet, which can be projected to $\Yvecg'$ using the function $g$.}
\label{fig:connection}
\end{figure}
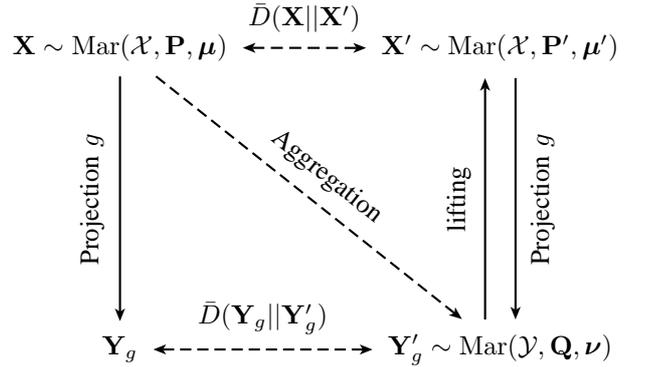

Throughout the remainder of this work we will stick to the following
\begin{assumption} \label{assumption:in}
The discrete-time Markov chain $\Xvec\sim\Mar{\dom{X},\Pvec,\muvec}$ is stationary, i.e., the initial distribution equals its invariant distribution $\muvec$. The alphabet of $\Xvec$ is $\dom{X} = \{1,\ldots,N\}$ and the partition function is $g{:}\ \dom{X}\to\dom{Y}=\{1,\dots,M\}$ with $1<M<N$. The \emph{$g$-projection} of $\Xvec$ is the stationary process $\Yvecg$ over the alphabet $\dom{Y}$, whose samples are defined by
\begin{equation}
Y_{g,n} := g(X_n).
\end{equation}
\end{assumption}

We are interested in performing model reduction by employing information-theoretic cost functions.  
In particular, we specify the $M$-partition problem, equivalently defined in~\cite{Meyn_MarkovAggregation}: 

\newcommand{\argmin}{\operatornamewithlimits{argmin}}
\begin{definition}[$M$-partition problem]
Given $\Xvec$ and $g$ as in Assumption \ref{assumption:in}, the \emph{$M$-partition problem} searches for the partition function $g$ such that the KLDR between the $g$-projection of $\Xvec$ and its best Markov approximation is minimal, i.e., it solves
\begin{equation} \label{eq:problem2}
\begin{split}
\argmin_{g\in[\dom{X}\to\dom{Y}]} \min_{\Yvec'} \{ \bar{D}(\Yvecg || \Yvec') \mid \Yvec'\hbox{ is Markov}\}.
\end{split}
\end{equation}
\end{definition}

For a fixed partition function $g$ the best Markov approximation (in the sense of the KLDR) of the $g$-projection $\Yvecg$ can be found analytically. With the matrix notation introduced in Section~\ref{ssec:MarkovFunction} we present
\begin{lem}\label{lem:optimalY} \rm
Given $\Xvec$, $g$ and $\Yvecg$ as in Assumption \ref{assumption:in}, let $\Yvecg'$ denote the best Markov approximation of the $g$-projection in the sense of the KLDR, i.e., 
\begin{equation}
\Yvecg' := \argmin_{\Yvec'} \{ \bar{D}(\Yvecg || \Yvec') \mid \Yvec'\hbox{ is Markov}\}.
\end{equation}
Then, $\Yvecg'\sim\Mar{\dom{Y},\Qvec,\nuvec}$ with $\nuvec^T=\muvec^T\Vvec$ and
\begin{equation}
 \Qvec = \Uvec^{\muvec}\Pvec\Vvec,
\end{equation}
which is a matrix notation for 
\begin{equation}
 Q_{kl} = \frac{\sum_{i\in\preim{k}}\sum_{j\in\preim{l}} \mu_iP_{ij}}{\sum_{i\in\preim{k}}\mu_i}, \;\; k,l\in\dom{Y}.
\label{eq:Qmatlong}
\end{equation}
\end{lem}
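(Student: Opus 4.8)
The plan is to reduce the inner minimization defining $\Yvecg'$ to a collection of decoupled, per-row cross-entropy minimizations, each of which is solved by Gibbs' inequality.

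First I would obtain a closed form for the objective $\kldr{\Yvecg}{\Yvec'}$ over time-homogeneous Markov chains $\Yvec'$. Fix such a chain with transition matrix $\Qvec'$ and initial distribution $\nuvec'$, and restrict attention to those $\Yvec'$ for which the divergence rate is finite (the others cannot minimize it); in particular $\nuvec'$ is positive on the support of $p_{Y_g}$ and $Q'_{ab}>0$ whenever the pair $(a,b)$ is realizable, so that the absolute-continuity conditions of Definition~\ref{def:KLDR} hold. Using the Markov factorization $p_{{Y'}_1^n}(y_1^n)=\nu'_{y_1}\prod_{k=2}^n Q'_{y_{k-1}y_k}$ and expanding the block divergence as $\kld{Y_{g,1}^n}{{Y'}_1^n}=-\ent{Y_{g,1}^n}-\sum_{y_1^n}p_{Y_{g,1}^n}(y_1^n)\log p_{{Y'}_1^n}(y_1^n)$, I would invoke the stationarity of $\Yvecg$ to collapse the single-time and pairwise marginals; dividing by $n$ and letting $n\to\infty$ (the limit exists because $\Yvecg$ is stationary and $\Yvec'$ is a time-homogeneous Markov chain, cf.\ the discussion after Definition~\ref{def:KLDR}) the entropy term converges to $\bar H(\Yvecg)$, the initial-distribution contribution is of order $1/n$ and vanishes, and I arrive at
\[
\kldr{\Yvecg}{\Yvec'} = -\bar H(\Yvecg) - \sum_{a,b\in\dom{Y}} r_{ab}\log Q'_{ab}, \qquad r_{ab}:=p_{Y_{g,1},Y_{g,2}}(a,b).
\]
In particular the divergence rate does not depend on $\nuvec'$, so the minimization is effectively over the transition matrix alone.

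Since $-\bar H(\Yvecg)$ is a constant, it remains to minimize the cross-entropy $-\sum_{a,b}r_{ab}\log Q'_{ab}$ over stochastic matrices. This decouples over rows: for each $a$ with $\nu_a:=\sum_b r_{ab}=p_{Y_g}(a)>0$, the identity $-\sum_b r_{ab}\log Q'_{ab}=-\sum_b r_{ab}\log\frac{r_{ab}}{\nu_a}+\nu_a\,\kld{r_{a,\cdot}/\nu_a}{Q'_{a,\cdot}}$ together with non-negativity of the Kullback-Leibler divergence shows the unique minimizer is $Q_{ab}=r_{ab}/\nu_a=p_{Y_{g,2}|Y_{g,1}}(b\mid a)$. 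Taking $\nuvec$ as the initial distribution then makes $\Yvecg'$ stationary without changing the cost. To put this in the stated form: because $\Xvec$ is stationary, $r_{ab}=\Prob{X_1\in\preim{a},\,X_2\in\preim{b}}=\sum_{i\in\preim{a}}\sum_{j\in\preim{b}}\mu_i P_{ij}$ and $\nu_a=\sum_{i\in\preim{a}}\mu_i$, which yields \eqref{eq:Qmatlong}; comparing matrix entries gives $\Qvec=\Uvec^{\muvec}\Pvec\Vvec$ and $\nuvec^T=\muvec^T\Vvec$. That $\nuvec$ is the invariant distribution of $\Qvec$ follows from $\sum_a r_{ab}=p_{Y_{g,2}}(b)=\nu_b$, i.e.\ $\nuvec^T\Qvec=\nuvec^T$, and $\Qvec$ is regular because $\Pvec$ is: $Q_{ab}>0$ precisely when some $i\in\preim{a}$ has $P_{ij}>0$ for some $j\in\preim{b}$, and $\Pvec^{k}>0$ for some $k$ forces $\Qvec^{k}>0$, so $\Yvecg'\sim\Mar{\dom{Y},\Qvec,\nuvec}$.

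The main obstacle is the first step: rigorously justifying the closed-form expression for $\kldr{\Yvecg}{\Yvec'}$, i.e.\ exchanging the limit with the sums, confirming that $\tfrac1n\ent{Y_{g,1}^n}$ converges to the entropy rate of the \emph{non-Markov} projection, and tracking the absolute-continuity conditions so the displayed identity is valid and finite --- it holds only for $\Yvec'$ with finite divergence rate, but that suffices since the candidate $\Qvec$ itself yields finite divergence rate. Once that expression is in hand, the optimization is a textbook Gibbs'-inequality argument and the remaining steps are bookkeeping.
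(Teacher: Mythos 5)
Your proof is correct. The paper itself offers no argument here beyond a citation to Gray's Corollary~10.4, and what you have written is precisely the standard derivation underlying that corollary --- the per-row cross-entropy decomposition of the divergence rate plus Gibbs' inequality, followed by the bookkeeping that identifies $r_{ab}$ and $\nu_a$ with the aggregated quantities $\sum_{i\in\preim{a}}\sum_{j\in\preim{b}}\mu_i P_{ij}$ and $\sum_{i\in\preim{a}}\mu_i$ --- so your argument is a valid, self-contained replacement for the cited result, with the concerns you flag (existence of $\lim_n \tfrac1n\ent{Y_{g,1}^n}$ for the stationary projection, and restriction to finite-divergence candidates) all handled by standard facts.
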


\begin{proof}
See~\cite[Cor.~10.4]{Gray_Entropy}.
\end{proof}

From now on, we keep the notation $\Yvecg'$ for the optimal aggregation (see Fig.~\ref{fig:connection}) of $\Xvec$, given a partition function $g$. 

\begin{remark}\rm
 The same aggregation was declared being optimal in~\cite{Weinan_Aggregation}, although by using a different cost function. Also~\cite[Thm.~1]{Meyn_MarkovAggregation} declares this aggregation as being optimal, although the cost function there is the KLDR between the original chain $\Xvec$ and the \emph{lifting} of $\Yvecg'$ (see Section~\ref{sec:pivecLifting} below).
\end{remark}

One thus obtains the transition matrix $\Qvec$ of the optimal Markov model $\Yvecg'$ from the joint distribution of two consecutive samples of $\Yvecg$. If this joint distribution completely specifies the process $\Yvecg$, then $\Yvecg\equiv\Yvecg'$, i.e., $\Xvec$ is lumpable (cf.~Lemma~\ref{lem:lump}). Note further that since $\Pvec$ is the transition matrix of a regular Markov chain, so is $\Qvec$~\cite[p.~140]{Kemeny_FMC}. 

We can now define the aggregation error of $\Xvec$ w.r.t. $g$:
\begin{definition}[Aggregation error]\label{def:aggerror}
Given $\Xvec$, $g$ and $\Yvecg$ as in Assumption \ref{assumption:in}, and $\Yvecg'$ as in Lemma \ref{lem:optimalY}. Then, 
\begin{equation}
\kldr{\Yvecg}{\Yvecg'}
\end{equation} 
is the \emph{aggregation error} of $\Xvec$ w.r.t. $g$.
\end{definition}
It immediately follows that the aggregation error is zero if $\Xvec$ is lumpable.


Following~\cite{Meyn_MarkovAggregation}, we split the $M$-partition problem into two sub-problems: finding the best Markov approximation of the projected process $\Yvecg$ (to which Lemma~\ref{lem:optimalY} provides the solution), and minimizing the aggregation error over all partition functions $g$ with a range of cardinality $M$. Thus, the optimization problem stated in (\ref{eq:problem2}) translates to finding
\begin{equation}
\argmin_{g\in[\dom{X}\rightarrow \dom{Y}]} \kldr{\Yvecg}{\Yvecg'}.
\end{equation}

\section{$\pivec$-lifting: Bounding the Aggregation Error}
\label{sec:pivecLifting}
Often, a direct evaluation of the aggregation error in Definition~\ref{def:aggerror} is mathematically cumbersome, since $\Yvecg$ is not necessarily Markov. The authors of~\cite{Meyn_MarkovAggregation} therefore suggested to \emph{lift} the aggregation $\Yvecg'$ to a Markov chain $\Xvec'$ over the alphabet $\dom{X}$, which subsequently allows a computation of the KLDR. 

The questions is now, whether there is a relation between the KLDR between $\Xvec$ and the lifted chain $\Xvec'$, and the aggregation error, $\kldr{\Yvecg}{\Yvecg'}$. Relying on Lemma~\ref{lem:kldbound}, we will answer this question affirmatively.

\begin{definition}[$\pivec$-lifting~{\cite[Def.~2]{Meyn_MarkovAggregation}}]\label{def:pilifting}                                                                                                                                                       
Given $\Xvec$, $g$ and $\Yvecg$ as in Assumption \ref{assumption:in}, $\Yvecg'\sim\Mar{\dom{Y},\Qvec,\nuvec}$ as in Lemma~\ref{lem:optimalY}, and $\pivec$ a positive probability distribution over the alphabet $\dom{X}$. The \emph{$\pivec$-lifting} of $\Yvecg'$ w.r.t. $g$, denoted by $\Xvec'^{\pivec}_g$, is a Markov chain over the alphabet $\dom{X}$ with transition matrix
\begin{equation}
 \Pvec' := \Vvec\Qvec\Uvec^{\pivec},
\end{equation}
which is a matrix notation for 
\begin{equation}
 {P'}_{ij} = \frac{\pi_j}{\sum_{k\in \preim{g(j)}}\pi_k} Q_{g(i)g(j)},\;\; i,j \in\dom{X}.
\end{equation}
\end{definition}

\begin{remark} \rm
An equivalent lifting method is suggested in~\cite{Weinan_Aggregation} and~\cite{Vidyasagar_KLDRate,Vidyasagar_MarkovAgg}.
\end{remark}

We conclude this section by presenting the elementary properties of $\pivec$-lifting. Properties 1), 2), and 3) appear also in~\cite{Meyn_MarkovAggregation}; the proofs can be found in Appendix~\ref{app:pivecProperties} and, in contrast to the proofs in~\cite{Meyn_MarkovAggregation}, appear in short matrix notation. To the best of the authors' knowledge, properties 4) and 5) are proved for the first time here.

\begin{prop}[Properties of $\pivec$-lifting] \label{prop:liftingProperties}
Given $\Xvec$, $g$ and $\Yvecg$ as in Assumption \ref{assumption:in},  
$\Yvecg'$ as in Lemma \ref{lem:optimalY}, and
$\pivec$ some distribution over $\dom{X}$. Then, the $\pivec$-lifting $\Xvec'^{\pivec}_g$ satisfies
\begin{enumerate}
\item $\Xvec'^{\pivec}_g$ is lumpable w.r.t. $g$ (and $\Yvecg'$ is the resulting $g$-projection);
\item The invariant distribution of $\Xvec'^{\muvec}_g$ is $\muvec$;
\item $\muvec = \argmin_{\pivec}\kldr{\Xvec}{\Xvecg'^{\pivec}}$;
\item $\Pvec'\gg\Pvec$;
\item $\kldr{\Yvecg}{\Yvecg'}\leq\kldr{\Xvec}{\Xvecg'^{\muvec}}$.
\end{enumerate}
\end{prop}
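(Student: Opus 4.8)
The plan is to establish the five properties roughly in the order listed, since several of them build on one another. The first three are the ``classical'' facts from~\cite{Meyn_MarkovAggregation}, and the last two are the new ingredients that connect the $\pivec$-lifting to the aggregation error via Lemma~\ref{lem:kldbound}.

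For property 1), I would verify the lumpability condition~\eqref{eq:condstrong} directly with the compact matrix notation. Since $\Pvec' = \Vvec\Qvec\Uvec^{\pivec}$, I compute $\Vvec\Uvec^{\zetavec}\Pvec'\Vvec = \Vvec\Uvec^{\zetavec}\Vvec\Qvec\Uvec^{\pivec}\Vvec$, and I expect the two key algebraic identities to be $\Uvec^{\zetavec}\Vvec = \eye_M$ (each partition-block row of $\Uvec^{\zetavec}$ sums to one over the corresponding block) and $\Uvec^{\pivec}\Vvec = \eye_M$ for the same reason; hence $\Vvec\Uvec^{\zetavec}\Pvec'\Vvec = \Vvec\Qvec = \Pvec'\Vvec$, which is exactly~\eqref{eq:condstrong}. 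That $\Yvecg'$ is the resulting projection then follows from Lemma~\ref{lem:lump}: the projected transition matrix is $\Uvec^{\muvec'}\Pvec'\Vvec$ where $\muvec'$ is the invariant distribution of $\Pvec'$, and this collapses to $\Qvec$ using $\Uvec^{\muvec'}\Vvec = \eye_M$ again. For property 2), I would simply check $\muvec^T\Pvec' = \muvec^T\Vvec\Qvec\Uvec^{\muvec} = \nuvec^T\Qvec\Uvec^{\muvec} = \nuvec^T\Uvec^{\muvec}$ using $\nuvec^T = \muvec^T\Vvec$ and $\nuvec^T\Qvec = \nuvec^T$ (Lemma~\ref{lem:lump}), and then observe $\nuvec^T\Uvec^{\muvec} = \muvec^T$ by the defining formula~\eqref{eq:umatdef} for $\Uvec^{\muvec}$ (the entries of $\Uvec^{\muvec}$ redistribute each block's mass $\nu_k$ back to $\muvec$-proportions). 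Property 3) is the variational statement: I would expand $\kldr{\Xvec}{\Xvecg'^{\pivec}}$ via~\eqref{eq:kldMarkov}, note that $\Qvec$ does not depend on $\pivec$, and split the logarithm $\log(P_{ij}/P'_{ij})$ into a $\pivec$-dependent part and a constant part; the $\pivec$-dependent part assembles into a sum of KL divergences between the true within-block conditional $\mu_iP_{ij}/(\cdot)$ and the lifting's within-block conditional $\pi_j/\sum_{k\in\preim{g(j)}}\pi_k$, which is minimized (to zero) exactly when $\pivec\propto\muvec$ block-wise, i.e.\ when $\pivec=\muvec$.

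Property 4) should be quick once 1) is in place: $P'_{ij} = \frac{\pi_j}{\sum_{k\in\preim{g(j)}}\pi_k}Q_{g(i)g(j)}$, and $\pi_j>0$ for all $j$ since $\pivec$ is a positive distribution, so $P'_{ij}=0$ iff $Q_{g(i)g(j)}=0$; and $Q_{g(i)g(j)}=0$ forces $\mu_iP_{ij}=0$ for all $i\in\preim{g(i)}$, $j\in\preim{g(j)}$ by~\eqref{eq:Qmatlong}, hence $P_{ij}=0$ (as $\mu_i>0$). This gives $\Pvec'\gg\Pvec$. Property 5), the payoff, is then assembled: set $\Xvec' := \Xvecg'^{\muvec}$; by 1) it is lumpable w.r.t.\ $g$ with $g$-projection $\Yvecg'$, by 4) $\Pvec'\gg\Pvec$, and $\Xvec'$ is stationary by 2). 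All hypotheses of Lemma~\ref{lem:kldbound} are met, so $\kldr{\Xvec}{\Xvecg'^{\muvec}} = \kldrate{X}{X'} \geq \kldrate{Y}{Y'} = \kldr{\Yvecg}{\Yvecg'}$, using that the $g$-projection of $\Xvec$ is $\Yvecg$ and of $\Xvec'$ is $\Yvecg'$.

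The main obstacle, as I see it, is purely bookkeeping in property 3): getting the decomposition of $\kldr{\Xvec}{\Xvecg'^{\pivec}}$ into a genuine nonnegative, $\pivec$-dependent divergence term plus a $\pivec$-independent remainder, and making sure the chain-rule/grouping of the double sum over $\dom{X}\times\dom{X}$ into blocks $\preim{k}\times\preim{l}$ is done correctly so that the minimizer is manifestly $\muvec$. Everything else reduces to the handful of identities $\Uvec^{\bullet}\Vvec = \eye_M$, $\nuvec^T = \muvec^T\Vvec$, $\nuvec^T\Uvec^{\muvec} = \muvec^T$, which I would state once and reuse. Since the excerpt already defers these proofs to Appendix~\ref{app:pivecProperties}, I would present the above as the roadmap and relegate the grinding to that appendix.
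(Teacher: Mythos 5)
Your proposal is correct and follows essentially the same route as the paper's Appendix~\ref{app:pivecProperties}: properties 1), 2) and 4) via the identities $\Uvec^{\bullet}\Vvec=\eye$, $\nuvec^T=\muvec^T\Vvec$, $\nuvec^T\Uvec^{\muvec}=\muvec^T$ and the positivity of $\muvec$ and $\pivec$, and property 5) by assembling the hypotheses of Lemma~\ref{lem:kldbound} from 1) and 4) (your explicit appeal to 2) for stationarity of $\Xvecg'^{\muvec}$ is a welcome extra care). The only divergence is property 3), for which the paper simply cites \cite[Thm.~1]{Meyn_MarkovAggregation} while you sketch a direct cross-entropy decomposition over the blocks $\preim{l}$; that argument is sound (it is the same device the paper itself uses in Appendix~\ref{app:PProperties} for the $\Pvec$-lifting minimizer), with the minor caveat that the minimizer is unique only up to block-wise rescaling of $\pivec$, so $\muvec$ is \emph{a} minimizer rather than \emph{the} minimizer.
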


\section{A Better Bound via $\Pvec$-lifting}
\label{sec:Plifting}
In Section~\ref{sec:pivecLifting}, we showed that the KLDR between $\Xvec$ and the $\pivec$-lifting with $\pivec=\muvec$, $\Xvecg'^{\muvec}$, provides an upper bound on the aggregation error for a given partition function $g$. Unfortunately, the bound is loose in the sense that for $\kldr{\Yvecg}{\Yvecg'} = 0$, we may have $\kldr{\Xvec}{\Xvecg'^{\muvec}} > 0$; see also~\cite{Vidyasagar_MarkovAgg}. One of the reasons for this disadvantage of $\pivec$-lifting is that, by construction, the lifted process $\Xvecg'^{\pi}$ does not contain information about the transition probabilities between states of $\Xvec$. We therefore propose a lifting which takes into account the transition matrix $\Pvec$ of the original process.

\begin{definition}[$\Pvec$-lifting]\label{def:plift}
Given $\Xvec$, $g$ and $\Yvecg$ as in Assumption~\ref{assumption:in} and $\Yvecg'\sim\Mar{\dom{Y},\Qvec,\nuvec}$ as in Lemma~\ref{lem:optimalY}. The \emph{$\Pvec$-lifting} of $\Yvecg'$ w.r.t. $g$, denoted by $\Xvecg'^{\Pvec}$, is a Markov chain over the alphabet $\dom{X}$ with a transition matrix $\hat{\Pvec}$ given by
\begin{equation}
 \hat{P}_{ij} := 
\begin{cases}
	\frac{P_{ij}}{\sum_{k\in g^{-1}(g(j))}P_{ik}} Q_{g(i)g(j)}, &\text{ if } \displaystyle\sum_{k\in g^{-1}(g(j))}P_{ik}>0\\
	\frac{1}{\card{g^{-1}(g(j))}}Q_{g(i)g(j)}, &\text{ if } \displaystyle\sum_{k\in g^{-1}(g(j))}P_{ik}=0
\end{cases}.\label{eq:myLift}
\end{equation}
\end{definition}

One of the main contributions of this paper is to show that the KLDR between $\Xvec$ and the $\Pvec$-lifting $\Xvecg'^{\Pvec}$ yields a better bound than the one obtained using $\pivec$-lifting. In Appendix~\ref{app:PProperties} we prove

\begin{thm}[Properties of $\Pvec$-lifting]\label{thm:PliftingProperties} 
Given $\Xvec$, $g$ and $\Yvecg$ as in Assumption~\ref{assumption:in} and $\Yvecg'$ as in Lemma~\ref{lem:optimalY}. Then, the $\Pvec$-lifting $\Xvecg'^{\Pvec}$ satisfies
\begin{enumerate}
\item $\Xvecg'^{\Pvec}$ is lumpable w.r.t. $g$ (and $\Yvecg'$ is the resulting $g$-projection);
\item $\hat{\Pvec}\gg\Pvec$;
\item (minimizer)
    \begin{equation}
    \Xvecg'^{\Pvec} = \argmin_{\hat{\Xvec}: \Yvecg' \text{ is $g$-projection of } \hat{\Xvec}} \kldr{\Xvec}{\hat{\Xvec}}
    \end{equation}
\item (better bounds than $\pi$-lifting)
    \begin{equation}
      \kldr{\Yvecg}{\Yvecg'}\leq \kldr{\Xvec}{\Xvecg'^{\Pvec}}\leq\kldr{\Xvec}{\Xvecg'^{\muvec}}
    \end{equation}
\item (tight bounds) If $\Xvec$ is lumpable w.r.t. $g$,
    \begin{equation}
      \kldr{\Yvecg}{\Yvecg'} =0 \Leftrightarrow \kldr{\Xvec}{\Xvecg'^{\Pvec}}=0.
    \end{equation}
\end{enumerate}
\end{thm}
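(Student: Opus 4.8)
The plan is to establish the five properties of the $\Pvec$-lifting in order, since each relies on the previous ones. For property 1), lumpability, I would verify the linear-algebraic condition of Lemma~\ref{lem:lump}, namely $\Vvec\Uvec^{\zetavec}\hat{\Pvec}\Vvec = \hat{\Pvec}\Vvec$ for every positive probability vector $\zetavec$. The key observation is that $\hat{P}_{ij}$ depends on $j$ only through $g(j)$ after summing over a block: for any $i$ and any $l\in\dom{Y}$, $\sum_{j\in\preim{l}}\hat{P}_{ij} = Q_{g(i)l}$, because the coefficients $\frac{P_{ij}}{\sum_{k\in\preim{l}}P_{ik}}$ (or the uniform weights in the degenerate case) sum to one over $j\in\preim{l}$. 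This shows $\hat{\Pvec}\Vvec = \Vvec'\Qvec$ where $\Vvec'$ is suitably interpreted, and a short computation then gives both the lumpability condition and that the resulting $g$-projection has transition matrix $\Uvec^{\muvec}\hat{\Pvec}\Vvec$; plugging in $\sum_{j\in\preim{l}}\hat{P}_{ij}=Q_{g(i)l}$ shows this equals $\Qvec$, so the projection is exactly $\Yvecg'$. Property 2), $\hat{\Pvec}\gg\Pvec$, is immediate from the definition: if $P_{ij}>0$ then the first branch applies with numerator $P_{ij}>0$ and $Q_{g(i)g(j)}\geq \mu_iP_{ij}/(\ldots)>0$ by~\eqref{eq:Qmatlong}, hence $\hat{P}_{ij}>0$.

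For property 3), the minimizer characterization, I would write out $\kldrate{X}{\hat{X}} = \sum_{i,j}\mu_iP_{ij}\log\frac{P_{ij}}{\hat{P}_{ij}}$ using~\eqref{eq:kldMarkov}, valid since $\hat{\Pvec}\gg\Pvec$. Any competitor $\tilde{\Pvec}$ whose $g$-projection equals $\Yvecg'$ must, by Lemma~\ref{lem:lump} applied to $\tilde{\Xvec}$, satisfy $\sum_{j\in\preim{l}}\tilde{P}_{ij} = Q_{g(i)l}$ for all $i,l$ (this is the lumpability constraint, using that the block-sums are determined by $\Qvec$). So the optimization decouples over blocks: for each fixed $i$ and each target block $l$, one minimizes $\sum_{j\in\preim{l}}\mu_iP_{ij}\log\frac{P_{ij}}{\tilde{P}_{ij}}$ subject to $\sum_{j\in\preim{l}}\tilde{P}_{ij}=Q_{g(i)l}$. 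This is a standard Gibbs-type inequality / log-sum inequality argument: the minimizer assigns $\tilde{P}_{ij}$ proportional to $P_{ij}$ within the block, scaled to total mass $Q_{g(i)l}$, which is exactly $\hat{P}_{ij}$. The degenerate case $\sum_{k\in\preim{l}}P_{ik}=0$ contributes nothing to the sum (the $\mu_iP_{ij}$ factors vanish), so the uniform choice there is harmless for optimality. Property 4) then follows by chaining: the left inequality $\kldr{\Yvecg}{\Yvecg'}\leq\kldr{\Xvec}{\Xvecg'^{\Pvec}}$ is Lemma~\ref{lem:kldbound} applied with $\Xvec' = \Xvecg'^{\Pvec}$ (which is lumpable by property 1 and dominates $\Pvec$ by property 2), and the right inequality $\kldr{\Xvec}{\Xvecg'^{\Pvec}}\leq\kldr{\Xvec}{\Xvecg'^{\muvec}}$ follows from property 3 once one checks that the $\pivec$-lifting $\Xvecg'^{\muvec}$ is a feasible competitor in the minimization, i.e., that $\Yvecg'$ is its $g$-projection — but that is exactly Property 1) of Proposition~\ref{prop:liftingProperties}.

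For property 5), tightness, the forward direction ($\kldr{\Xvec}{\Xvecg'^{\Pvec}}=0 \Rightarrow \kldr{\Yvecg}{\Yvecg'}=0$) is automatic from property 4), and needs no lumpability hypothesis. The reverse direction is where the lumpability assumption on $\Xvec$ enters: if $\Xvec$ is lumpable w.r.t. $g$, then $\Yvecg$ is itself a Markov chain and $\Yvecg\equiv\Yvecg'$ (by Lemma~\ref{lem:optimalY} and the discussion following it), so trivially $\kldr{\Yvecg}{\Yvecg'}=0$; I then need $\kldr{\Xvec}{\Xvecg'^{\Pvec}}=0$, i.e., $\hat{\Pvec}=\Pvec$ on the support of $\Xvec$. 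When $\Xvec$ is lumpable, Lemma~\ref{lem:lump} gives $\Vvec\Uvec^{\muvec}\Pvec\Vvec = \Pvec\Vvec$, which unpacks to $\sum_{k\in\preim{g(j)}}P_{ik} = (\text{something independent of which }j\text{ in the block})$ — more usefully, $\frac{P_{ij}}{\sum_{k\in\preim{g(j)}}P_{ik}}\cdot\sum_{k\in\preim{g(j)}}P_{ik}\cdot\frac{Q_{g(i)g(j)}}{\sum_{k\in\preim{g(j)}}P_{ik}}$ collapses; concretely, lumpability forces $Q_{g(i)g(j)} = \sum_{k\in\preim{g(j)}}P_{ik}$ for every $i$ (the row-block-sums of $\Pvec$ are constant across the block and equal the corresponding entry of $\Qvec$), and substituting this into~\eqref{eq:myLift} gives $\hat{P}_{ij}=\frac{P_{ij}}{\sum_k P_{ik}}\cdot\sum_k P_{ik} = P_{ij}$ whenever the denominator is positive, hence $\hat{\Pvec}=\Pvec$ wherever it matters and the KLDR vanishes. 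The main obstacle is getting the block-sum identity $Q_{g(i)g(j)}=\sum_{k\in\preim{g(j)}}P_{ik}$ cleanly out of the matrix lumpability condition $\Vvec\Uvec^{\muvec}\Pvec\Vvec=\Pvec\Vvec$ and handling the degenerate ``zero denominator'' branch of~\eqref{eq:myLift} with care, since in the lumpable case that branch can only occur when the whole block is unreachable from $i$, so it again contributes zero to the divergence sum and does not threaten tightness.
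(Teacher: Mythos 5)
Your proposal is correct and, for properties 1, 2, 3, and 5, follows essentially the same route as the paper's proof in Appendix~\ref{app:PProperties}: lumpability via the block-row-sum condition $\sum_{j\in\preim{l}}\hat{P}_{ij}=Q_{g(i)l}$, absolute continuity directly from~\eqref{eq:myLift} and~\eqref{eq:Qmatlong}, the minimizer property via a per-block Gibbs/cross-entropy argument over liftings of the form $\tilde{P}_{ij}=b_{ij}Q_{g(i)g(j)}$ with $\sum_{j\in\preim{l}}b_{ij}=1$, and tightness by substituting the lumpability identity $Q_{g(i)g(j)}=\sum_{k\in\preim{g(j)}}P_{ik}$ into~\eqref{eq:myLift}; your care with the zero-denominator branch matches what the paper needs.

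The one place you genuinely diverge is the right inequality of property 4. You obtain $\kldr{\Xvec}{\Xvecg'^{\Pvec}}\leq\kldr{\Xvec}{\Xvecg'^{\muvec}}$ as an immediate corollary of property 3, after checking that the $\pivec$-lifting is a feasible competitor because its $g$-projection is $\Yvecg'$ (Proposition~\ref{prop:liftingProperties}, property 1). The paper instead computes the gap explicitly, showing $\kldr{\Xvec}{\Xvecg'^{\muvec}}-\kldr{\Xvec}{\Xvecg'^{\Pvec}}=\mutinf{X_n;X_{n-1}}-\mutinf{Y_{g,n};X_{n-1}}\geq 0$ by the data processing inequality. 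Your route is shorter and reuses optimality you have already established; the paper's computation buys an explicit information-theoretic expression for how much the $\Pvec$-lifting improves on the $\pivec$-lifting, in the same spirit as the identity~\eqref{eq:diffbounds} exploited in Section~\ref{sec:infoloss}. Both arguments are valid, and the left inequality is handled identically in both (Lemma~\ref{lem:kldbound}, enabled by properties 1 and 2).
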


Tightness follows from the fact that for a lumpable $\Xvec$, the $\Pvec$-lifting yields $\hat{\Pvec}=\Pvec$; the invariant distribution of $\hat{\Pvec}$ trivially coincides with $\muvec$, the invariant distribution of $\Pvec$. In general, however, the invariant distribution of $\hat{\Pvec}$ differs from $\muvec$, contrasting the corresponding result for the $\pivec$-lifing (cf.~Proposition~\ref{prop:liftingProperties}, property 2).

Interestingly, the restriction to lumpable chains for the tightness result cannot be dropped: There are Markov chains $\Xvec$ which are lumpable in a weaker sense (i.e., not for all initial distributions but, e.g., only for the invariant distribution) for which consequently the aggregation error vanishes, but for which the $\Pvec$-lifting does not yield $\hat{\Pvec}=\Pvec$. A simple example of such a chain is given in~\cite[p.~139]{Kemeny_FMC} (cf.~Section~\ref{ex:weaklyOnly}).

As this theorem shows, $\Pvec$-lifting yields the best upper bound on the aggregation error achievable for Markov chains over the alphabet $\dom{X}$. This can also be explained intuitively, by expanding the KLDR as
\ifthenelse{\single=1}{
\begin{IEEEeqnarray}{RCL}
 \kldr{\Xvec}{\Xvecg'^{\Pvec}}&=& \sum_{i,j\in\dom{X}} \mu_i P_{ij}\log\frac{P_{ij}}{\hat{P}_{ij}}\\
&=& \sum_{i,j\in\dom{X}} \mu_i P_{ij}\log\frac{{\sum_{k\in\dom{S}_j}P_{ik}}}{ Q_{g(i)g(j)}}\\
&\stackrel{(a)}{=}& \sum_{i,j\in\dom{X}} \mu_i P_{ij} \log\frac{\left(\sum_{k\in\dom{S}_i}\mu_k\right)\left(\sum_{l\in\dom{S}_j}P_{il}\right)}{\sum_{k\in\dom{S}_i}\mu_k\sum_{l\in\dom{S}_j}P_{kl}}\\
&=&\ent{Y_{g,n}|Y_{g,n-1}}-\ent{Y_{g,n}|X_{n-1}}\label{eq:diffbounds}
\end{IEEEeqnarray}}
{
\begin{IEEEeqnarray}{RCL}
 \kldr{\Xvec}{\Xvecg'^{\Pvec}}&=& \sum_{i,j\in\dom{X}} \mu_i P_{ij}\log\frac{P_{ij}}{\hat{P}_{ij}}\\
&=& \sum_{i,j\in\dom{X}} \mu_i P_{ij}\log\frac{{\sum_{k\in\dom{S}_j}P_{ik}}}{ Q_{g(i)g(j)}}\\
&\stackrel{(a)}{=}& \sum_{i,j\in\dom{X}} \mu_i P_{ij} \log\frac{\left(\sum_{k\in\dom{S}_i}\mu_k\right)\left(\sum_{l\in\dom{S}_j}P_{il}\right)}{\sum_{k\in\dom{S}_i}\mu_k\sum_{l\in\dom{S}_j}P_{kl}}\notag\\\\
&=&\ent{Y_{g,n}|Y_{g,n-1}}-\ent{Y_{g,n}|X_{n-1}}\label{eq:diffbounds}
\end{IEEEeqnarray}
}
where $(a)$ is due to Lemma~\ref{lem:optimalY}. The last line corresponds to the difference between the upper and lower bounds on the entropy rate of a function of a Markov chain~\cite[Thm.~4.5.1]{Cover_Information2}; equality of these bounds implies Markovity of $\Yvecg$, i.e., lumpability of $\Xvec$ w.r.t. $g$~\cite[Thm.~9]{GeigerTemmel_kLump}. In other words, minimizing this cost function yields the function $g$ for which the projected process $\Yvecg$ is ``as Markov as possible''.

\newcommand{\redr}[1]{\bar{R}(#1)}
\newcommand{\entr}[1]{\bar{H}(#1)}

\section{$\Pvec$-Lifting and Information Loss}\label{sec:infoloss}\label{ssec:plift}
We now analyze how the KLDR between the original process and a $\Pvec$-lifted process connects with the information loss induced by the projection function $g$. This parallels the analysis in~\cite{Meyn_MarkovAggregation}, claiming that $\pivec$-lifting maximizes the redundancy rate\footnote{The reader should not be misled by the fact that the redundancy rate satisfies $\redr{\Yvecg'}=\mutinf{Y_{g,0};Y_{g,1}}$, i.e., that it is formulated as a mutual information. The fact that the current sample $Y_{g,0}$ shares much information with the future sample $Y_{g,1}$ only emphasizes that the process is redundant, i.e., that it conveys \emph{little new information in each time step.}} of the aggregated process. Interestingly, the cost function induced by $\Pvec$-lifting \emph{minimizes} a special notion of information loss introduced recently. Moreover, the latter analysis paves the way for solving the state space reduction problem using information-theoretic algorithms, such as the information bottleneck 
method (cf.~Section~\ref{ssec:ibmethod}).

\begin{definition}[Relevant Information Loss~\cite{Geiger_Relevant_arXiv}]\label{def:relevantLoss}
 Let $X$ be an RV with finite alphabet $\dom{X}$, and let $Y:=g(X)$. Let $S$ be another RV with alphabet $\dom{S}$ representing \emph{relevant information}. The information loss \emph{relevant w.r.t. $S$} is
\begin{equation}
 \loss[S]{X\to Y} = \mutinf{S;X}-\mutinf{S;Y}=\mutinf{X;S|Y}.\label{eq:relevantloss}
\end{equation}
\end{definition}

A simple example to illustrate this notion of information loss is the following: Let $S$ be a binary signal, and let $X$ be this signal superimposed by noise, e.g., the output of a noisy communications channel. By passing $X$ through a function $g$, e.g., a quantizer in a digital receiver, some information about $S$ is lost. $\loss[S]{X\to Y}$ does not quantify the information (about $S$) lost over the channel, but the \emph{additional} information lost by quantizing the channel's output.
 
We now make a connection between Definition~\ref{def:relevantLoss} and the KLDR between $\Xvec$ and the $\Pvec$-lifted chain, $\Xvecg'^{\Pvec}$. To this end, let
\begin{equation}
 g^\bullet := \argmin_g \kldr{\Xvec}{\Xvecg'^{\Pvec}}.
\end{equation}
Recall from~\eqref{eq:diffbounds} that
\begin{equation}
 \kldr{\Xvec}{\Xvecg'^{\Pvec}}=\ent{Y_{g,n}|Y_{g,n-1}}-\ent{Y_{g,n}|X_{n-1}}
\end{equation}
which, by adding and subtracting $\ent{Y_{g,n}}$ can be rewritten as
\ifthenelse{\single=1}{
\begin{equation}
 \kldr{\Xvec}{\Xvecg'^{\Pvec}}=\mutinf{Y_{g,n};X_{n-1}}-\mutinf{Y_{g,n};Y_{g,n-1}}
=\loss[Y_{g,n}]{X_{n-1}\to Y_{g,n-1}}\label{eq:loss}
\end{equation}
}
{
\begin{align}
 \kldr{\Xvec}{\Xvecg'^{\Pvec}}&=\mutinf{Y_{g,n};X_{n-1}}-\mutinf{Y_{g,n};Y_{g,n-1}}\notag\\
&=\loss[Y_{g,n}]{X_{n-1}\to Y_{g,n-1}}\label{eq:loss}
\end{align}
}
where $\loss[Y_{g,n}]{X_{n-1}\to Y_{g,n-1}}$ is the \emph{information loss relevant w.r.t. $Y_{g,n}$} induced by projecting $X_{n-1}$ through the function $g$. Finding the optimal function $g^\bullet$ thus amounts to \emph{minimizing} information loss.

To the present date, we could not verify if this cost function has an interpretation in spectral theory. However, as mentioned above, it minimizes the difference between first-order upper and lower bounds on the entropy rate of the projected process $\Yvecg$ and, with~\cite[Thm.~9]{GeigerTemmel_kLump}, makes the projected process ``as Markov as possible''. 

\section{The Information Bottleneck Method: A Possible Way to Model Reduction}\label{ssec:ibmethod}
In this section we show that the state space reduction problem can be solved by a well-known information-theoretic algorithm: the information bottleneck method~\cite{Tishby_InformationBottleneck}. Since in~\cite{Geiger_Relevant_arXiv} the information bottleneck (IB) method was reformulated in terms of relevant information loss, the results of Section~\ref{ssec:plift} are essential for the development of the following paragraphs.

Let $X$ be a discrete RV representing an observation (e.g., the output of a noisy communications channel) or a data set. We are interested in a compressed representation $Y$ of this RV. In rate-distortion theory (e.g.,~\cite{Gray_Entropy}) one pursues the goal to minimize the mutual information between $X$ and its compression $Y$, $\mutinf{X;Y}$, subject to satisfying a certain distortion criterion $d$ (e.g., the mean-squared reconstruction error). This can be cast as a variational problem
\begin{equation}
\argmin_{p_{Y|X}} \mutinf{X;Y}+\beta d(X,Y)
\end{equation}
where $\beta$ is a Lagrange multiplier, and where stochastic compressions $p_{Y|X}(x,y)=\Prob{Y=y|X=x}$ are permitted.

The IB method takes up this approach by replacing the distortion measure by the negative mutual information between the compressed RV $Y$ and a \emph{relevant} RV $S$, representing the information one considers as meaningful and one wants to preserve (e.g., the binary input to the communications channel). The IB method therefore tries to solve
\begin{equation}
\argmin_{p_{Y|X}}\ \mutinf{X;Y}-\beta\mutinf{S;Y}\label{eq:IB}
\end{equation}
where the minimization runs over all stochastic relationships and where $\beta$ trades compression and preservation of information: A large value of $\beta$ places emphasis on preservation of relevant information, while a small value leads to high compression. Typical applications of the IB method include word and document clustering~\cite{Slonim_Document,Slonim_AgglomIB} or speech processing~\cite{Wohlmayr_Speech,Hecht_Speech}.

With $\beta\to\infty$, we focus on the second term of~\eqref{eq:IB} which can be rewritten with Definition~\ref{def:relevantLoss} as 
\begin{IEEEeqnarray}{RCL}
 \mutinf{S;Y} &=& \mutinf{S;X}-\loss[S]{X\to Y}.
\end{IEEEeqnarray}
With the restriction to deterministic compressions $p_{Y|X}$ determined by functions $g{:}\ \dom{X}\to\dom{Y}$, one obtains a formulation of the IB method which minimizes the relevant information loss, i.e., which solves
\begin{equation}
 \argmin_{g\in[\dom{X}\to\dom{Y}]}\ \loss[S]{X\to Y}.\label{eq:IBhard}
\end{equation}

For this problem, in~\cite{Slonim_AgglomIB} an iterative procedure, called \emph{agglomerative IB} (AIB) was introduced, which successively merges two elements of a partition of $\dom{X}$ until the desired cardinality $M$ is reached. The method is greedy, i.e., it minimizes the information lost in each step~\cite{Slonim_AgglomIB}, but does not guarantee that the global optimum~\eqref{eq:IBhard}, i.e., the least possible relevant information loss, is achieved.

Comparing~\eqref{eq:IBhard} with~\eqref{eq:loss}, one can see that the relevant information $Y_{g,n}$ depends on $g$, i.e., on the object to be optimized. Since in such a case the IB method is not applicable directly, we relax the problem by applying~\cite[Cor.~1]{Geiger_Relevant_arXiv}:
\begin{equation}
 \loss[Y_{g,n}]{X_{n-1}\to Y_{g,n-1}} \leq \loss[X_n]{X_{n-1}\to Y_{g,n-1}}\label{eq:lossDPI}
\end{equation}
Instead of minimizing $\kldr{\Xvec}{\Xvecg'^{\Pvec}}$, we only minimize its upper bound given by~\eqref{eq:lossDPI}. We thus look for
\begin{equation}
 g^{IB}:=\argmin_g  \loss[X_n]{X_{n-1}\to Y_{g,n-1}}.\label{eq:probIB}
\end{equation}
The possibility to apply the IB method and its algorithms (e.g., AIB) for state space reduction comes at the cost of optimality. As the next subsection shows, this cost is not as high as one would expect.


\subsection{Sub-Optimality of IB}\label{ssec:suboptimal}
By relaxing the $M$-partition problem to~\eqref{eq:probIB}, one loses the property that the cost function minimizes the upper bound on the aggregation error $\kldr{\Yvecg}{\Yvecg'}$. However, the obtained upper bound is still better than $\kldr{\Xvec}{\Xvecg'^{\muvec}}$:
\begin{align}
  & \loss[X_n]{X_{n-1}\to Y_{g,n-1}} \notag\\&= \ent{X_n|Y_{g,n-1}}-\ent{X_n|X_{n-1}}\\
&= \ent{X_n,Y_{g,n}|Y_{g,n-1}}-\entrate{X}\\
&= \ent{X_n|Y_{g,n},Y_{g,n-1}}+\underbrace{\ent{Y_{g,n}|Y_{g,n-1}}}_{=\entr{\Yvecg'}}-\entrate{X}\\
&\leq \ent{X_n|Y_{g,n}}+\entr{\Yvecg'}-\entrate{X}\\
&=\ent{X}-\ent{Y'_g}+\entr{\Yvecg'}-\entrate{X}\\
&=\redrate{X}-\redr{\Yvecg'}\\
&=\kldr{\Xvec}{\Xvecg'^{\muvec}}
\end{align}
where the last line is due to~\cite[Lem.~3]{Meyn_MarkovAggregation}. 

The solution of the relaxed problem~\eqref{eq:probIB} might not coincide with the solution of~\eqref{eq:loss}. To be specific: Even if a Markov chain $\Xvec$ is lumpable, neither the AIB nor the IB method implementing the relaxed optimization problem necessarily find the optimal $M$-partition. We will elaborate on this topic in the example in Section~\ref{ssec:AIBsuboptEx}.

\section{Examples}
\label{sec:examples}
In this section we illustrate our theoretical results at the hand of a few examples. In particular, we show the applicability of the information bottleneck method for Markov chain aggregation in Section~\ref{ssec:AIBexample}.

\subsection{Example 1}
We take the matrix given in~\cite[Section~V.A]{Meyn_MarkovAggregation}
\begin{equation}
 \Pvec=\left[\begin{array}{ccc}
              0.97&0.01&0.02\\0.02&0.48&0.50\\0.01&0.75&0.24
             \end{array}
\right]
\end{equation}
and use three different functions $g$ inducing the following partitions of $\dom{X}$: $\{\{1,2\},\{3\}\}$, $\{\{1,3\},\{2\}\}$, and $\{\{1\},\{2,3\}\}$.

For all the resulting aggregations, we compute upper bounds on the aggregation error using both the $\pivec$-lifting with $\pivec=\muvec$ and the $\Pvec$-lifting. In addition to that, the invariant distributions of the $\Pvec$-lifted Markov chains $\Xvecg'^{\Pvec}$ are computed and compared to $\muvec$, the invariant distribution of the original chain $\Xvec$. The results are shown in Table~\ref{tab:ex1}.

\begin{table}
\caption{Results of Example 1}
\label{tab:ex1}
 \begin{tabular}{|l||c||c|c|}
\hline
Partition & KLDR ($\muvec$) & KLDR ($\Pvec$) &  $\hat\muvec$ \\
& bit/sample & bit/sample & $[0.347,0.388,0.265]^T$\\
\hline
$\{\{1,2\},\{3\}\}$ & 0.823 & 0.185 & $[0.077,0.658,0.265]^T$\\
$\{\{1,3\},\{2\}\}$ & 0.808 & 0.317 & $[0.065,0.388,0.546]^T$\\
$\{\{1\},\{2,3\}\}$ & 0.037 & 0.001 & $[0.347,0.388,0.265]^T$\\
\hline
 \end{tabular}
\end{table}

As it can be seen, the partition $\{\{1\},\{2,3\}\}$ yields the best results in terms of KLDR. Moreover, it can be seen that the KLDR using $\Pvec$-lifting is smaller than the KLDR using $\pivec$-lifting in all three cases, as suggested by Theorem~\ref{thm:PliftingProperties}. However, unlike for $\pivec$-lifting with $\pivec=\muvec$, the invariant distribution obtained with our method depends on $g$ and in general differs from $\muvec$. An exception is the optimal partition, where $\Yvecg$ and $\Yvecg'$ are very close in terms of the KLDR, i.e., where $\Xvec$ is ``nearly'' lumpable w.r.t. $g$.

\subsection{Example 2} 
\label{ssec:AIBexample}
In this example we took the transition matrix $\Pvec$ from~\cite[Fig.~7]{Meyn_MarkovAggregation} and applied the agglomerative information bottleneck method~\cite{Slonim_AgglomIB} to aggregate the chain\footnote{We used the VLFeat Matlab implementation~\cite{VLFeat} of the agglomerative IB method.}, as described in Section~\ref{ssec:ibmethod}. As it can be seen in Fig.~\ref{fig:matrix}, the partitions of the alphabet appear to be reasonable and, for $M=5$, coincide with the solution obtained in~\cite{Meyn_MarkovAggregation}. In essence, the aggregation reduces the alphabet to groups of strongly interacting states.

An interesting fact can be observed by looking at Fig.~\ref{fig:curves}, which compares the KLDR curves for both lifting methods (the aggregation was obtained using the agglomerative IB method in both cases). While for $\pivec$-lifting the KLDR seems to be a function decreasing with increasing $M$, the same does not hold for $\Pvec$-lifting: If a certain partition is ``nearly'' lumpable, the KLDR curve exhibits a local minimum (cf. Theorem~\ref{thm:PliftingProperties}). Trivially, global minima with value zero are obtained for $M=1$ and $M=N$; thus, the curve depicted in Fig.~\ref{fig:curves} will decrease eventually if $M$ is further increased.

These results are relevant for properly choosing the cardinality of the reduced state space: For $\pivec$-lifting, it was suggested that a change in slope of the KLDR indicates that a meaningful partition was obtained~\cite[Section~V.D]{Meyn_MarkovAggregation}. Utilizing the tighter bound from $\Pvec$-lifting allows to choose the cardinality by detecting local minima.

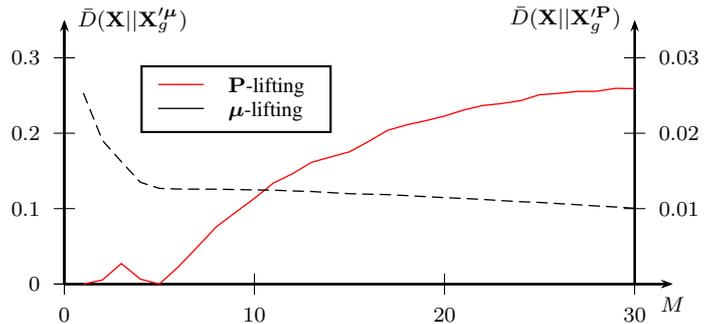
\begin{figure}[t]
 \centering
\begin{pspicture}[showgrid=false](-0.5,-0.5)(8.5,3.6)
	\footnotesize
	\psaxes[Dx=10,dx=2.5,Dy=0.1,dy=1]{->}(0.0,0)(8,3.5)[$M$,-90][$\kldr{\Xvec}{\Xvecg'^{\muvec}}$,0]
	\psaxes[labels=y,xAxis=false,ylabelPos=right,Dy=0.01,dy=1]{->}(7.5,0)(0,0)(7.5,3.5)[,0][$\kldr{\Xvec}{\Xvecg'^{\Pvec}}$,180]
	\rput[lb](1,2){\psframebox%
	{\begin{tabular}{ll}
	  	\psline[linewidth=0.5pt,linecolor=red](0.1,0.1)(0.7,0.1) &\hspace*{0.5cm} $\Pvec$-lifting\\%
		\psline[linewidth=0.5pt,linecolor=black](0.1,0.1)(0.7,0.1) &\hspace*{0.5cm} $\muvec$-lifting
	 \end{tabular}}}
	\readdata{\Mine}{MineKLDR.dat}
	\readdata{\Meyn}{MeynKLDR.dat}
	\psset{xunit=2.5mm,yunit=100cm}
	\dataplot[plotstyle=line,linecolor=red,linewidth=0.5pt]{\Mine}
    \psset{xunit=2.5mm,yunit=10cm}
	\dataplot[plotstyle=line,linecolor=black,style=Dash,linewidth=0.5pt]{\Meyn}
\end{pspicture}
 \caption{KLDR for the $\Pvec$- and the $\pivec$-lifting with $\pivec=\muvec$ ($\muvec$-lifting in the figure) for different cardinalities $M$ of the aggregated chain's alphabet. Both curves were obtained using the agglomerative IB method. Note that the KLDRs according to the different liftings are displayed with different scales, and note that the graph shows only $M\le 30<100=N$.}
 \label{fig:curves}
\end{figure}


\begin{figure*}
\centering
\subfigure[]{\includegraphics[width=0.3\textwidth]{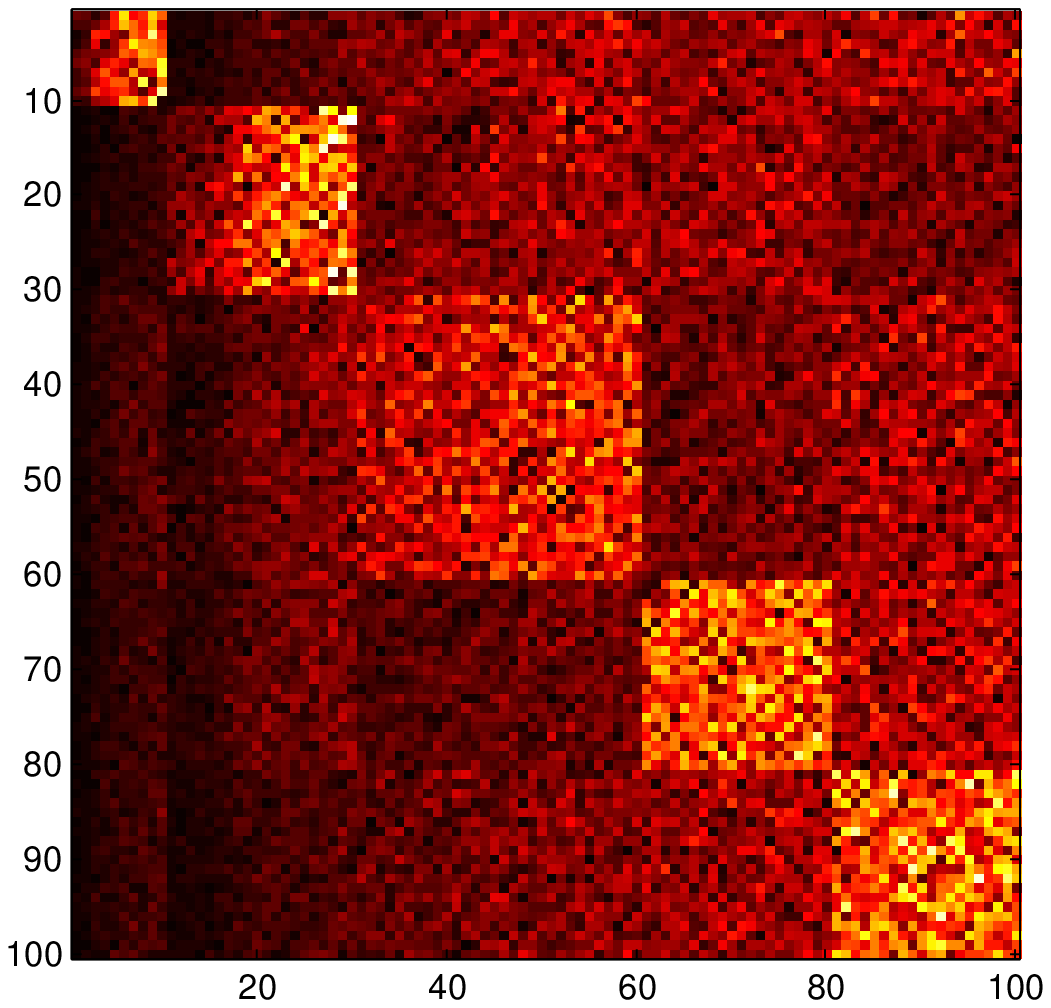}}\hfill
\subfigure[]{\includegraphics[width=0.3\textwidth]{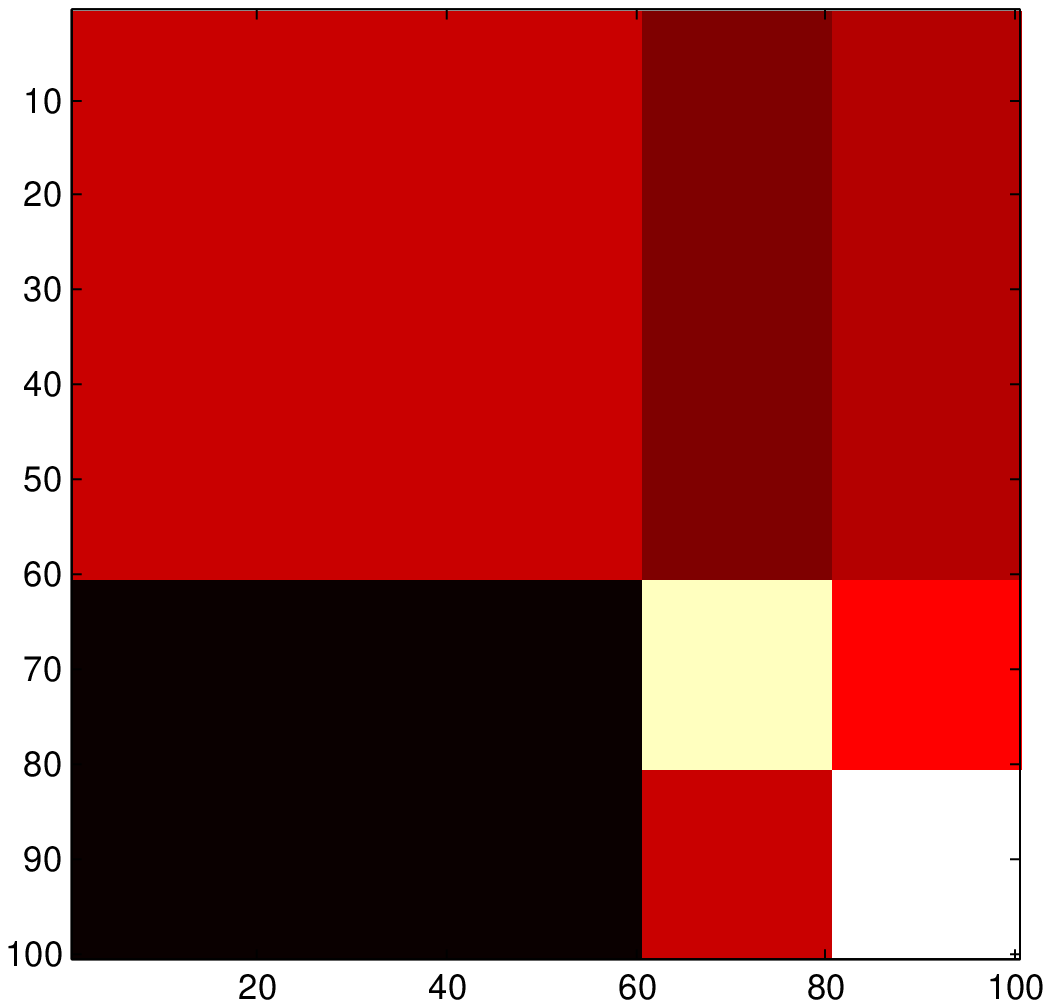}}\hfill
\subfigure[]{\includegraphics[width=0.3\textwidth]{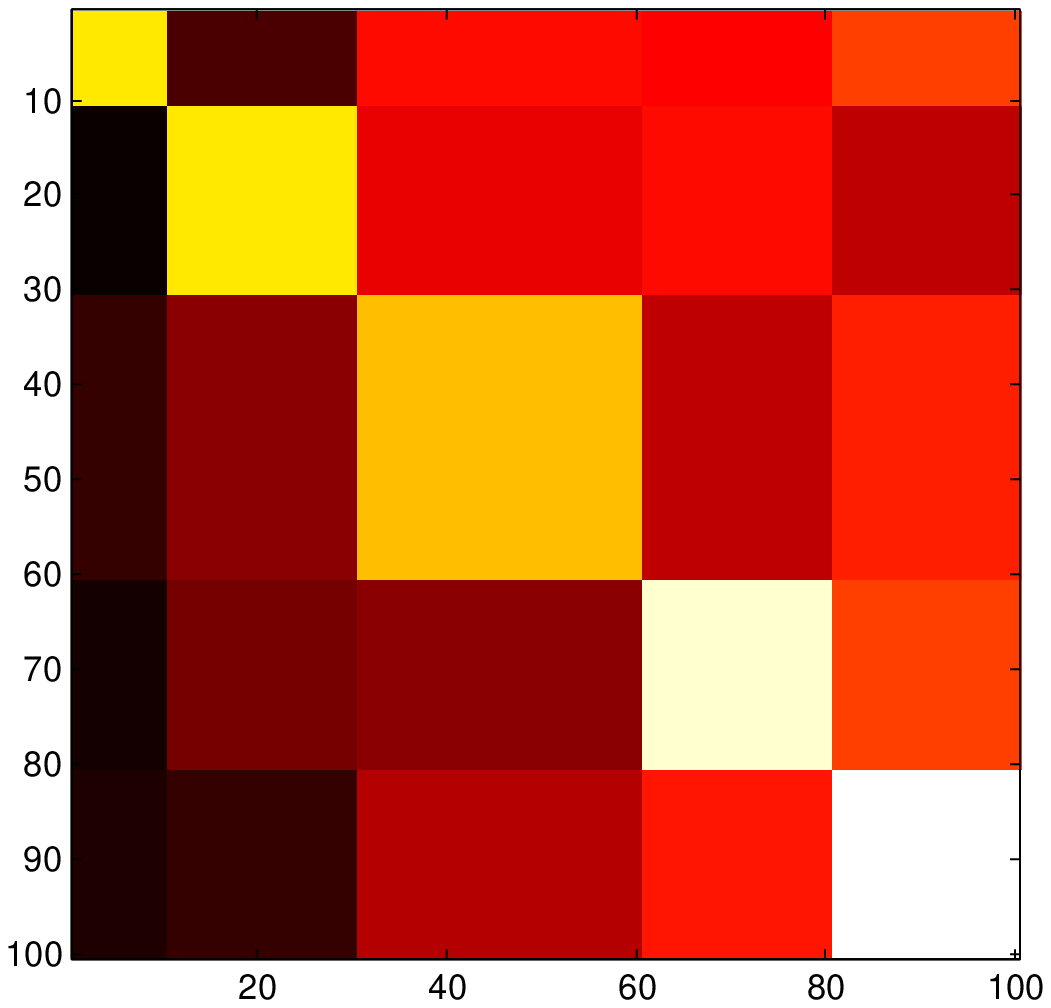}}
\caption{An illustration of Example 2: The original transition matrix (a) and the partitions obtained by using the agglomerative information bottleneck method. Blocks of the same color indicate that the corresponding states are mapped to the same output. $M=3$ (b) and $M=5$ (c).}
\label{fig:matrix}
\end{figure*}

\subsection{Example 3}
\label{ssec:AIBsuboptEx}
In this example we show that the relaxed optimization problem does not necessarily find the optimal partition. We start with a Markov chain $\Xvec$ with state space $\dom{X}=\{1,2,3\}$ and investigate the bi-partition problem (i.e., $M=2$). Let the transition matrix be given as
\begin{equation}
 \Pvec = \left[\begin{array}{ccc}
                0.0475 & 0.9025 & 0.05\\
   0.9025 & 0.0475 & 0.05\\
   0.95 & 0.05 & 0
               \end{array}
\right].
\end{equation}
Since this chain is lumpable for the partition $\{\{1, 2\},\{3\}\}$ (induced by the optimal function $g^\bullet$), one obtains
\begin{equation}
 \loss[Y_{g^\bullet,n}]{X_{n-1}\to Y_{g^\bullet,n-1}}=0.
\end{equation}
Computer simulations show, however, that this partition leads to a larger value of $\ent{X_n|Y_{g^\bullet,n-1}}$ than the other two options (namely, 1.19 bit compared to 0.55 and 0.69 bit, respectively). Since here the AIB and IB methods coincide\footnote{The bi-partition is obtained by merging two states.}, this example shows that the relaxation of the optimization problem does not necessarily lead to the optimal partition.

It is interesting to observe, however, that the information bottleneck method provides the same partition function as the method introduced in~\cite{Meyn_MarkovAggregation}, namely $\{\{1\},\{2,3\}\}$. The eigenvalues of the additive reversibilization of $\Pvec$ are $\lambda_1=1$, $\lambda_2=-0.038$, and $\lambda_3=-0.867$, the latter two inducing the partitions $\{\{1, 2\},\{3\}\}$ and $\{\{1\},\{2,3\}\}$, respectively. Hence, IB and the method in~\cite{Meyn_MarkovAggregation} respond with the solution related to the eigenvalue with the second-largest modulus, while the optimal solution remains to be related to the second-largest eigenvalue. This suggests a closer investigation of the interplay between the proposed cost function, its relaxation, and spectral theory, especially when the relevant eigenvalues are negative, cf.~\cite{Meyn_MarkovAggregation}.

\subsection{Example 4}
\label{ex:weaklyOnly}
We finally take an example from~\cite[pp.~139]{Kemeny_FMC}, which shows that our upper bound on the aggregation error is not tight in general, but only for lumpable $\Xvec$. To this end, let
\begin{equation}
 \Pvec=\left[\begin{array}{ccc}
\frac{1}{4} &\frac{1}{4} &\frac{1}{2} \\
0 &\frac{1}{6} &\frac{5}{6} \\
\frac{7}{8} &\frac{1}{8} &0
     \end{array}\right].
\end{equation}
As it can be verified easily, this chain is lumpable in the weak sense w.r.t. the partition $\{\{1\},\{2,3\}\}$, but not lumpable; i.e., $\Yvecg$ is a Markov chain if $\Xvec$ is initialized with the invariant distribution, but~\eqref{eq:condstrong} is not fulfilled. To show that the bound is not tight, we observe that $\kldr{\Yvecg}{\Yvecg'}=0$ but that, with
\begin{equation}
  \hat{\Pvec}=\left[\begin{array}{ccc}
\frac{1}{4} &\frac{1}{4} &\frac{1}{2} \\
\frac{7}{12} & \frac{5}{72} &\frac{25}{72} \\
\frac{7}{12} &\frac{5}{12} &0
     \end{array}\right] \neq \Pvec
\end{equation}
we get $\kldr{\Xvec}{\Xvecg'^{\Pvec}}=0.347>0$.

\def\Ivec{\mathbf{I}}
\def\tr{\sigma} 
\def\Z{\mathbf{Z}}
\def\st{i}
\def\stSet{\dom{X}}
\def\intervals{\mathbb{I}}
\def\cyl{{\cal C}}
\def\reaction{R}
\def\tracesSet{{\Sigma}^{[0,T]}}
\def\gene{G}
\def\gene{G}
\def\protein{P}
\def\Rvec{\mathbf{R}}

\section{Application to models of bio-molecular systems}
\label{sec:bioexample}

Recent advances in measurement techniques brought the need for quantitative modeling in biology 
\cite{Wilkinson_SystemsBiology}. 
Markov models are a major tool used for modeling the stochastic nature of bio-molecular interactions in cells. 
However, even the simplest networks with only a few interacting species can result in very large Markov chains, in which case their analysis becomes computationally inefficient or prohibitive. 
In these cases, reducing the state space of the model, with minimal information loss, is an important challenge. 
We illustrate on an example how our reduction method can be used in such a scenario.
The model defined by stochastic chemical kinetics evolves in continuous-time, following a  continuous-time Markov chain (CTMC). 
We will show how our aggregation method can be applied to reduce this CTMC by aggregating a subordinated DTMC.
The existing theory confirms that the resulting partition will also be suitable for the original CTMC.


For a well-mixed reaction system with molecular species $\spec{1},\ldots,\spec{\specN}$, the state of a system is typically modeled by a multiset of species' abundances: 
$\xvec=(x_1,\ldots,x_{\specN})\in \specStateSpace \subseteq \specSpace_{0}$. The dynamics of such a system are determined by a set of $\reacN$ reactions. The $k$-th reaction reads 
\begin{align}
\cons{1k}\spec{1},\ldots,\cons{\specN k}\spec{\specN k} \rA{\srate{k}} \prdn{1k}\spec{1},\ldots, \prdn{\specN k}\spec{\specN k},
\end{align}
where $\cons{ik}\in \N_{0}$ and $\prdn{ik}\in \N_{0}$ denote the substrate and product stoichiometric coefficients of species $i$, respectively, and where $c_k$ is the rate with which the reaction occurs.
If the $k$-th reaction occurs, after being in the state $\xvec$, the next state will be $\xvec+(\prdnvec{k}-\consvec{k}) = \xvec+\gaintvec{k}$, where $\gaintvec{k}$ is referred to as the stoichiometric change vector. 
The species multiplicities follow a continuous-time Markov chain and we denote the state of the system as the $t$-indexed random vector $\Xvec(t)=(X_1(t),\dots,X_{\specN}(t) )$. The probability of moving to the state $\xvec +\gaintvec{k}$ from $\xvec$ after time $\deltaT$ is 
$\Prob{\Xvec(t+\deltaT)=\xvec+\gaintvec{k}|\Xvec(t)=\xvec} = \sprop{k}(\xvec)\deltaT +o(\deltaT)$,
with $\sprop{k}$ the propensity of reaction $k$, the functional form of which is assumed to follow the principle of mass-action
$\sprop{k}(\xvec)= \srate{k}\prod_{i=1}^{\specN}{x_i\choose\cons{ik}}$ \cite{gillespie_07}.
%
The generator matrix $\Rvec{:}\ \specStateSpace \times \specStateSpace \rightarrow \R$ of the CTMC is determined by
$\Rvec(\xvec,\xvec+\gaintvec{k}) =\sprop{k}(\xvec)$, $\Rvec(\xvec,\xvec) = - \sum_{k=1}^{\reacN}\Rvec(\xvec,\xvec+\gaintvec{k})$, and zero otherwise.  

To illustrate, assume that a gene $\gene$ spontaneously turns on and off at rates $\srate{1}$ and $\srate{2}$ respectively, and that it regulates the expression of protein $\protein$.
More precisely, whenever a gene is turned on, the protein is synthesized at a rate $\srate{3}$, such that  $\srate{1},\srate{2}\ll\srate{3}$, that is, the gene activation is slow relative to the rate of protein synthesis.
Such a system requires a stochastic model and it can be specified with the following set of reactions: 
\begin{align*}
& \gene_0 \xrightleftharpoons[\srate{2}]{\srate{1}} \gene_1,\;\;\;
\gene_1+\protein_0\xrightharpoonup{c_3}\gene_1+\protein, \;\;\;
\protein \xrightharpoonup{\srate{4}}\protein_{0}
\end{align*}
Here, $\protein_0$ is introduced to simplify the system so that the total number of proteins is $n_{\protein}$. This is arguably more realistic than the unlimited birth-death process, as $\protein_0$ represents the (limited) pool of amino-acid building blocks for the proteins.
Finally, the protein can spontaneously degrade at rate $\srate{4}$.

Since the Markov process assigned to the model of a biochemical reaction network evolves in continuous-time, we cannot directly apply our aggregation method to the CTMC model of a biochemical network. Instead, we aggregate the subordinated DTMC:

\begin{definition}[Subordinated DTMC] \rm
Let $\Xvec$ be a CTMC over the state space $\dom{X}$ with generator matrix $\Rvec$ and transient marginal distribution $\pivec$, such that $\pi_i(t)=\Prob{\Xvec(t)=i}$. 
For $\unifc \geq \sup_{i\in\dom{X}} |\Rvec_{ii}|$, let $\Pvec:=\Rvec/\unifc+\Ivec_{N}$ ($\Ivec_N$ is an identity matrix of dimension $N$).
The DTMC defined by $\Pvec$ is the subordinated process of $\Xvec$ with uniformization constant $\lambda$, denoted by $\Xvec_{\unifc}$.
\end{definition}

\begin{figure}
\centering
\subfigure[]{\includegraphics[width=0.5\textwidth]{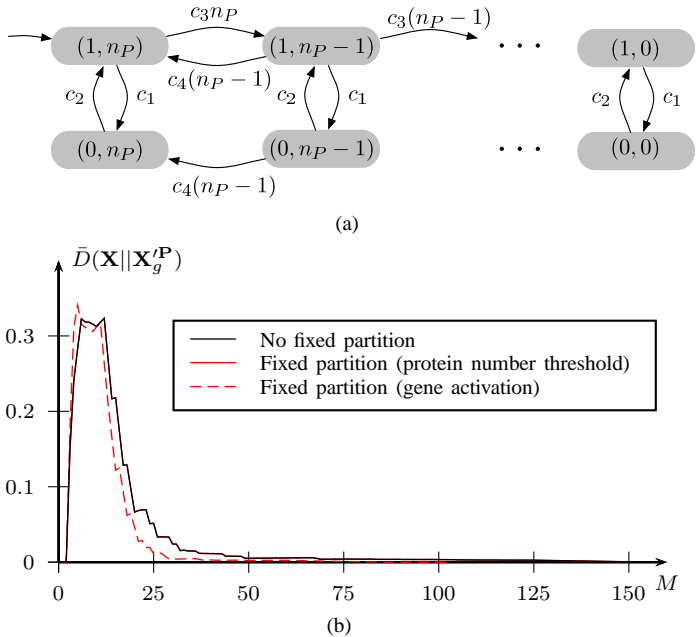}}\\
\subfigure[]{
\begin{pspicture}[showgrid=false](-0.5,-0.5)(8.,4)
	\footnotesize
	\psaxes[Dx=25,dx=1.25,Dy=0.1,dy=1]{->}(0.0,0)(8,4.0)[$M$,-90][$\kldr{\Xvec}{\Xvecg'^{\Pvec}}$,0]
	\rput[lb](1.5,2){\psframebox%
	{\begin{tabular}{ll}
	  	\psline[linewidth=0.5pt,linecolor=black](0.1,0.1)(0.7,0.1) &\hspace*{0.5cm} No fixed partition\\%
		\psline[linewidth=0.5pt,linecolor=red](0.1,0.1)(0.7,0.1) &\hspace*{0.5cm} Fixed partition (protein number threshold)\\
		\psline[linewidth=0.5pt,linecolor=red,style=Dash](0.1,0.1)(0.7,0.1) &\hspace*{0.5cm} Fixed partition (gene activation)\\%
	 \end{tabular}}}
	\readdata{\Full}{Full.dat}
	\readdata{\FixedLarge}{Fixed101.dat}
	\readdata{\FixedSmall}{Fixed20.dat}
	\psset{xunit=0.5mm,yunit=20cm}
	\listplot[xEnd=150,plotstyle=line,linecolor=red,style=Dash,linewidth=0.5pt]{\FixedLarge}
	\listplot[xEnd=150,plotstyle=line,linecolor=red,linewidth=0.5pt]{\FixedSmall}
	\listplot[xEnd=150,plotstyle=line,linecolor=black,linewidth=0.5pt]{\Full}
\end{pspicture}
}
\caption{Application to modeling bio-molecular interactions:
(a) The continuous-time Markov chain assigned to the gene expression example;
(b) The upper bound to the KLDR obtained by agglomerative information bottleneck method, for all partition sizes:
(black) no partition class is fixed, 
(red) a partition class where the number of proteins is bigger than a threshold $T=0.9n=180$ is fixed, 
(red, dashed) all states where the gene is turned on (a cluster with $101$ states) are fixed. 
The red line is not visible because it equals the black line. 
The aggregation error is displayed only up to a partition size of $M=150$.}
\label{fig:model2}
\end{figure}

The subordinated\footnote{The DTMC $\Xvec_{\unifc}$ is also called a \emph{uniformized} or \emph{randomized} chain.} DTMC agrees with the original process in its transient distribution~\cite{norris97}.
Moreover, the KLDR between the subordinated DTMCs equals to the KLDR between the original CTMCs in the limit of a large uniformization constant, as can be shown by discretizing the time domain (see~\cite{cohn2010mean} for detailed presentation).
The definition of the KLDR for CTMCs and its existence criterion can be found in \cite[Ch.~6]{tanja_thesis}.
In the algorithm, we choose the uniformization constant $\unifc = \sup_{i\in\dom{X}} |\Rvec_{ii}|+1$.

For the initial vector $\Xvec(0)=(1,n_{\protein})$ 
(where the components denote copy numbers of $\gene_1$ and $P$ respectively), 
the CTMC has $N=2(n_{\protein}+1)$ reachable states (Fig.~\ref{fig:model2}a). 
After the chain exhibits stationary behavior, the algorithm is applied for $n_{\protein}=100$ and  for $M=1,2,\ldots,202$. 
Moreover, the algorithm was adapted to search for the optimal partition after one partition class is fixed.
This is desirable in scenarios where the modeler a priori wants to track the joint probability of all the states that satisfy a certain property.
For example, one may be interested in a priori clustering those states for which the number of proteins is bigger than a given threshold $T=0.9n_{\protein}$, or all the states where the gene is turned on (depicted in the top row in Fig.~\ref{fig:model2}a).
In Fig.~\ref{fig:model2}b, we compare the upper bound on the aggregation error for the optimal partition, and 
optimal partition upon fixing each of the two mentioned partition classes. 
The results 
confirm that our algorithm provides only sub-optimal solutions, because, for example, lumping a priori all states with an activated gene yields a better bound than when no partition class is fixed.
In particular, notice that for $M=2$, lumping all states where the gene is turned on satisfies the criterion of lumpability, rendering 
the upper bound on the error to be within numerical precision. 

\section{Conclusion}
In this work we presented a new method for Markov chain state space reduction based on information-theoretic criteria. Specifically, the Kullback-Leibler divergence rate between the process obtained by simply partitioning the alphabet of the original chain and its best Markov approximation is employed as a cost function. The Kullback-Leibler divergence rate between the original chain and the \emph{lifting} of the optimal Markov approximation was shown to yield an upper bound on the cost function.

By properly defining the lifting, we not only obtain the best upper bound under certain restrictions, but also a cost function which links the reduced-alphabet model to the notion of lumpability. In addition to that, it is shown that the information bottleneck method can be used for model reduction by relaxing the optimization problem. Future work shall investigate possible connections between the proposed cost function and the spectral theory of Markov chains, the extension to non-stationary Markov chains, and the generalization to stochastic aggregations.

\appendices
\section{Proof of Proposition~\ref{prop:liftingProperties}}
\label{app:pivecProperties}
For the first property (which is also mentioned in~\cite{Meyn_MarkovAggregation}) we show that the condition
\begin{equation}
 \Vvec\Uvec^{\zetavec}\Pvec'\Vvec = \Pvec'\Vvec 
\end{equation}
from Lemma~\ref{lem:lump} holds for all possible $\pivec$-liftings and for all positive probability vectors $\zetavec$. Letting $\Pvec'=\Vvec\Qvec\Uvec^{\pivec}$,
\begin{IEEEeqnarray}{RCL}
 \Vvec\Uvec^{\zetavec}\Vvec\Qvec\Uvec^{\pivec}\Vvec = \Vvec\Qvec\Uvec^{\pivec}\Vvec.
\end{IEEEeqnarray}
Since $\Uvec^{\pivec}\Vvec=\eye$ for all positive probability vectors $\pivec$, equality is achieved and the first result is proved.

For the second property (cf.~\cite[Thm.~2, Property~3]{Meyn_MarkovAggregation}) note that with $\Pvec'=\Vvec\Qvec\Uvec^{\muvec}$,
\begin{equation}
 \muvec^T\Pvec' = \muvec^T\Vvec\Qvec\Uvec^{\muvec} = \nuvec^T\Qvec\Uvec^{\muvec} = \nuvec^T\Uvec^{\muvec}=\muvec^T
\end{equation}
where the second and third equality are due to Lemma~\ref{lem:lump} and the last follows from the definition of $\Uvec^{\muvec}$ in~\eqref{eq:umatdef}.
                                                                        
For the third property we refer the reader to~\cite[Thm.~1]{Meyn_MarkovAggregation}.

Next, observe that $Q_{g(i)g(j)}=0$ implies $P_{ij}=0$ (see~\eqref{eq:Qmatlong}, combined with the fact that $\muvec$ is positive~\cite[Thm.~4.1.4]{Kemeny_FMC}). For the entries of the lifted matrix we can now write
\begin{equation}
 P'_{ij} = \frac{\pi_j}{\sum_{k\in\preim{g(j)}} \pi_k} Q_{g(i)g(j)}.
\end{equation}
Since $\pivec$ is positive, it follows that $P'_{ij}=0$ implies $P_{ij}=0$, or equivalently, $\Pvec'\gg\Pvec$. 

The last property immediately follows from Lemma~\ref{lem:kldbound}; the lemma may be applied because $\Xvecg'^{\muvec}$ is lumpable to $\Yvecg'$ (property 1) and $\Pvec'\gg\Pvec$ (property 4). This completes the proof.\endproof
\section{Proof of Theorem~\ref{thm:PliftingProperties}} 
\label{app:PProperties}
For the proof we note that another condition for lumpability is given by the entries of the matrix $\mathbf{R}=\Pvec\Vvec$. In particular, iff for all $h,l\in\dom{Y}$ the elements
\begin{equation}
 R_{il}:=\sum_{j\in\preim{l}} P_{ij} \label{eq:KemenyLump}
\end{equation}
are the same for all $i\in\preim{h}$, the chain is lumpable w.r.t. $g$~\cite[Thm.~6.3.2]{Kemeny_FMC}. Using this with~\eqref{eq:myLift} one gets
\begin{IEEEeqnarray}{RCL}
 \hat{R}_{il} &=& \sum_{j\in\preim{l}} \hat{P}_{ij} = Q_{hl}.
\end{IEEEeqnarray}
Clearly, $\hat{R}_{il}$ assumes the same values for all $i\in\preim{h}$, as required. This completes the proof of the first statement\footnote{Note that this statement holds for all stochastic matrices used for lifting, i.e., the lifting matrix does not have to be equal to the transition matrix of the original chain.}.

The second statement is obvious from the Definition of $\Pvec$-lifting and from the proof of property 4 of Proposition~\ref{prop:liftingProperties}.

For the third statement, we introduce an arbitrary lifting
\begin{equation}
 \tilde{P}_{ij} = b_{ij}Q_{g(i)g(j)}
\end{equation}
subject to $\sum_{j\in\preim{l}}b_{ij}=1$ for all $l\in\dom{Y}$ and all $i\in\dom{X}$. With~\eqref{eq:KemenyLump}, this condition is necessary and sufficient for lumpability of the lifted chain $\tilde{\Xvec}$ with transition matrix $\tilde{\Pvec}$. We write for the KLDR
\begin{IEEEeqnarray}{RCL}
 \kldrate{X}{\tilde{X}} &=& \sum_{i,j\in\dom{X}}\mu_iP_{ij}\log\frac{P_{ij}}{\tilde{P}_{ij}}\\
&=& \sum_{i,j\in\dom{X}}\mu_iP_{ij}\log\frac{P_{ij}}{ b_{ij}Q_{g(i)g(j)}}\\
&=& \entrate{\Yvecg'}-\entrate{X}+\sum_{i,j\in\dom{X}}\mu_iP_{ij}\log\frac{1}{ b_{ij}}.
\end{IEEEeqnarray}
The last term can be written as
\begin{IEEEeqnarray*}{RCL}
 \sum_{i,j\in\dom{X}}\mu_iP_{ij}\log\frac{1}{ b_{ij}} &=& \sum_{i\in\dom{X}}\mu_i\sum_{l\in\dom{Y}}R_{il}\sum_{j\in\preim{l}}\frac{P_{ij}}{R_{il}}\log\frac{1}{ b_{ij}}.
\end{IEEEeqnarray*}
Here, the last term on the right is a cross-entropy, since both $b_{ij}$ and $\frac{P_{ij}}{R_{il}}$ are probability vectors on $\preim{l}$. The cross-entropy is minimized\footnote{This is a direct consequence of the fact that the Kullback-Leibler divergence vanishes if and only if the considered probability mass functions are equal~\cite[pp.~31]{Cover_Information2}.} iff for all $j\in\preim{l}$
\begin{equation}
  b_{ij}=\frac{P_{ij}}{R_{il}}=\frac{P_{ij}}{\sum_{k\in\preim{l}}P_{ik}}.
\end{equation}
Since the sums over $i$ and $l$ are expecations, the minimum is achieved iff above condition holds also for all $i\in\dom{X}$ and all $l\in\dom{Y}$ for which $R_{il}>0$. If $R_{il}=0$, the assignment for $b_{ij}$ is immaterial for $j\in\preim{l}$.

To show that the $\Pvec$-lifting indeed yields a better bound observe that with $\ent{Y_{g,n}|Y_{g,n-1}}=\entrate{\Yvecg'}$
\begin{align}
&\kldr{\Xvec}{\Xvecg'^{\muvec}}-\kldr{\Xvec}{\Xvecg'^{\Pvec}}\notag\\
&=\ent{X}-\entrate{X}-\ent{Y_g'}+\ent{Y_{g,n}|X_{n-1}}\\
&=\ent{X_n}-\ent{X_n|X_{n-1}}-\ent{Y_{g,n}}+\ent{Y_{g,n}|X_{n-1}}\\
&=\mutinf{X_n;X_{n-1}}-\mutinf{Y_{g,n};X_{n-1}}\\
&\geq0
\end{align}
by the data processing inequality. $\kldr{\Xvec}{\Xvecg'^{\Pvec}}\geq\kldr{\Yvecg}{\Yvecg'}$ is obtained by Lemma~\ref{lem:kldbound}, see Proposition~\ref{prop:liftingProperties}, property 5).

For the fifth property, note that the sufficient and necessary condition for lumpability~\eqref{eq:condstrong}, namely that 
\begin{equation}
 R_{il}=\sum_{j\in\preim{l}} P_{ij}=Q_{hl}
\end{equation}
is the same for all $i\in\preim{h}$, can be used in the definition of $\hat{\Pvec}$:
\begin{equation}
 \hat{P}_{ij}=\frac{P_{ij}}{\sum_{k\in\dom{S}_j}P_{ik}} Q_{g(i)g(j)}
= \frac{P_{ij}}{\sum_{k\in\dom{S}_j}P_{ik}} R_{ig(j)}=P_{ij}
\end{equation}
This proves the ``$\Rightarrow$'' part. The ``$\Leftarrow$'' part follows from Lemma~\ref{lem:kldbound}. This completes the proof.\endproof

\section*{Acknowledgments}
The authors wish to thank Kun Deng and Prashant Mehta for providing the data set for the example in Section~\ref{ssec:AIBexample}.

\bibliographystyle{IEEEtran}
\bibliography{IEEEabrv,lumping.bib}

\begin{IEEEbiography}[{\includegraphics[width=1in,height=1.25in,clip,keepaspectratio]{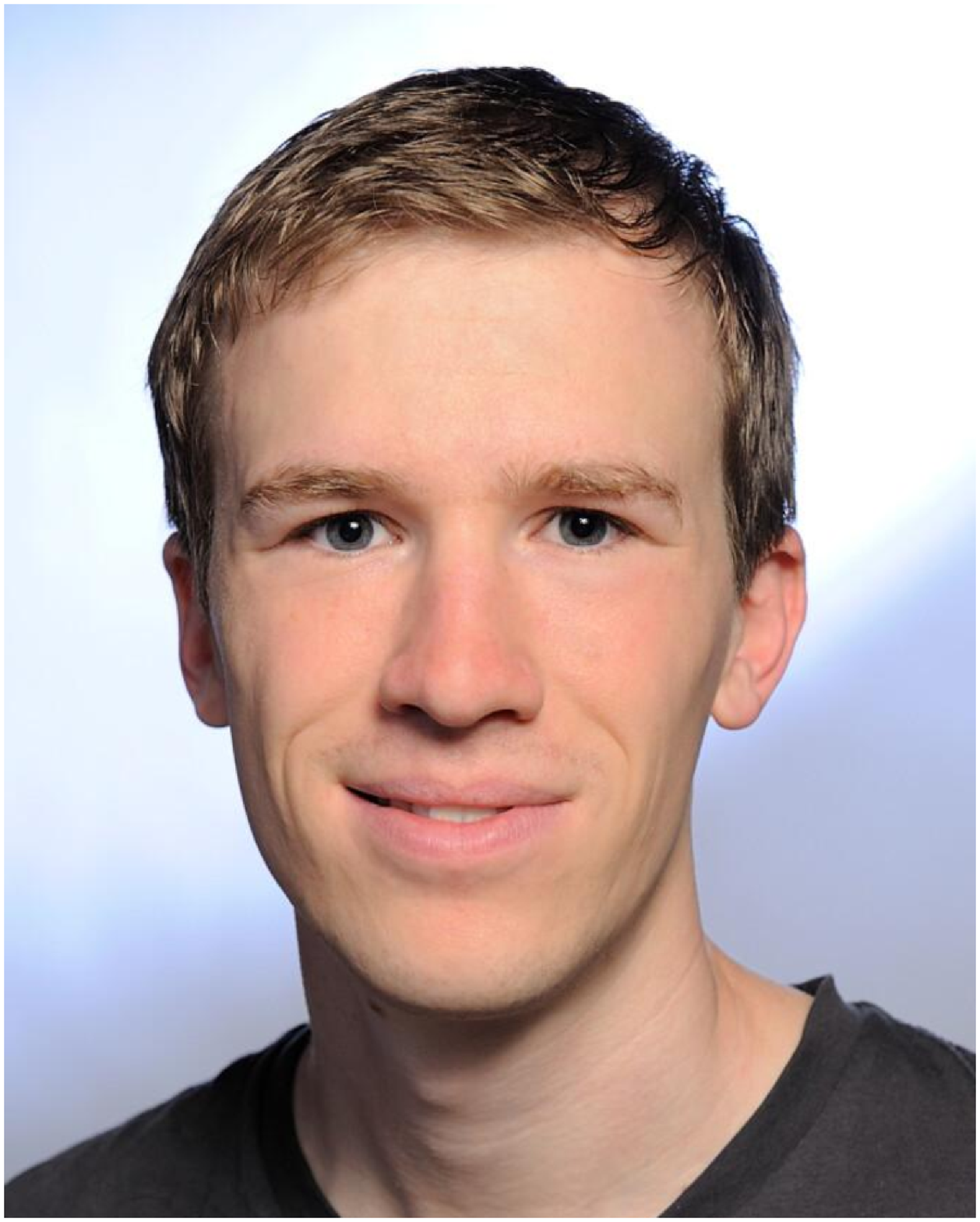}}]{Bernhard C. Geiger}
(S'07, M'14) was born in Graz, Austria, in 1984. He received the Dipl.-Ing. degree in electrical engineering (with distinction) and the Dr. techn. degree in electrical and information engineering (with distinction) from Graz University of Technology, Austria in 2009 and 2014, respectively.

In 2009 he joined the Signal Processing and Speech Communication Laboratory, Graz University of Technology, as a Project Assistant and took a position as a Research and Teaching Associate at the same lab in 2010. He is currently a postdoctoral researcher at the Institute of Communication Engineering, TU Munich. His research interests cover the intersection of information theory with system theory and signal processing, certain topics in the theory of Markov chains, and the analysis of GNSS acquisition.
\end{IEEEbiography}

\begin{IEEEbiography}[{\includegraphics[width=1in,height=1.25in,clip,keepaspectratio]{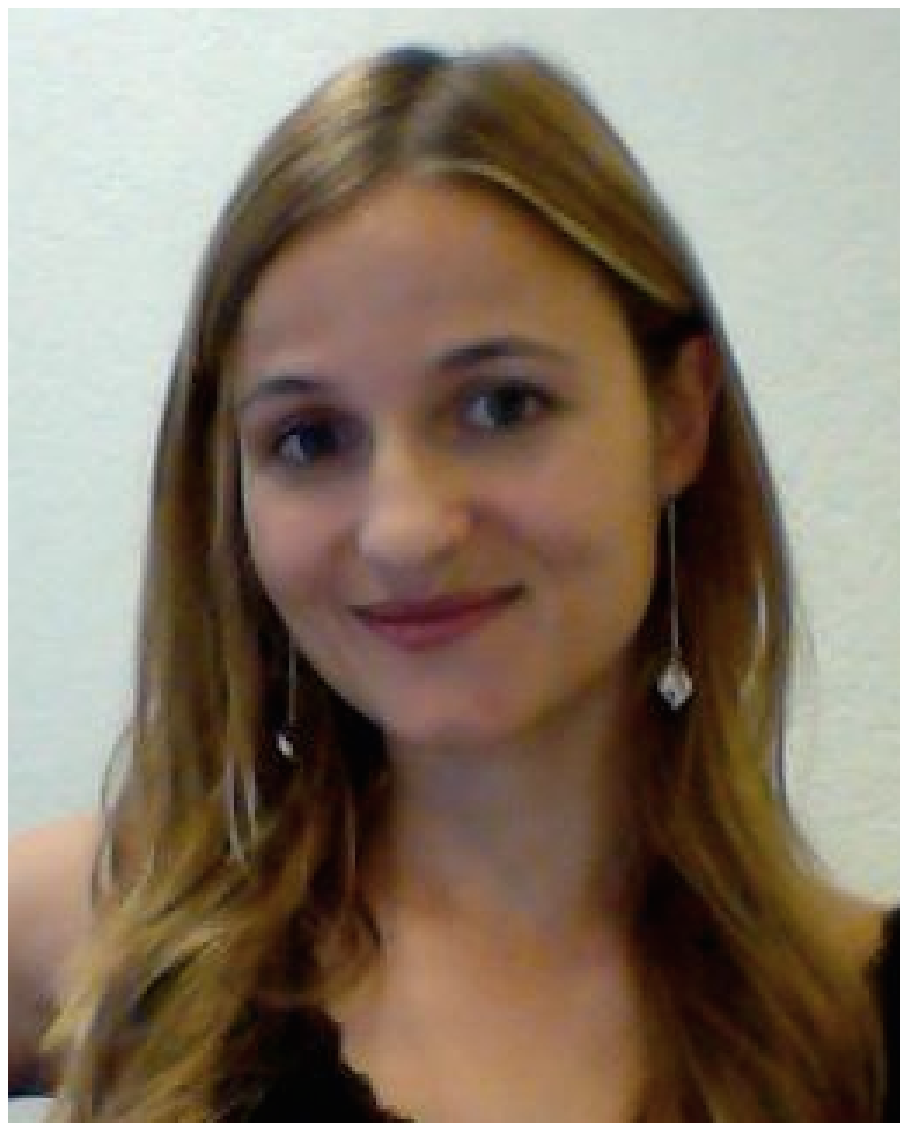}}]{Tatjana Petrov}
is a postdoctoral Fellow at IST Austria. She holds a Ph.D. degree from ETH Zurich (Swiss Federal School of Technology Zurich) and a M.Sc. degree in Theoretical Computer Science from University of Novi Sad, Serbia. In her PhD work, in the scope of SystemsX (the Swiss Initiative for Systems Biology), towards understanding complex signaling pathway dynamics, Tatjana developed a formal framework for exact and approximate reductions of stochastic rule-based models of complex biochemical networks. Tatjana’s current work focuses on formal verification approaches to modeling and analysis of complex systems, with a major interest in stochastic and hybrid systems, interface theories, and applications to molecular biology.
\end{IEEEbiography}

\begin{IEEEbiography}[{\includegraphics[width=1in,height=1.25in,clip,keepaspectratio]{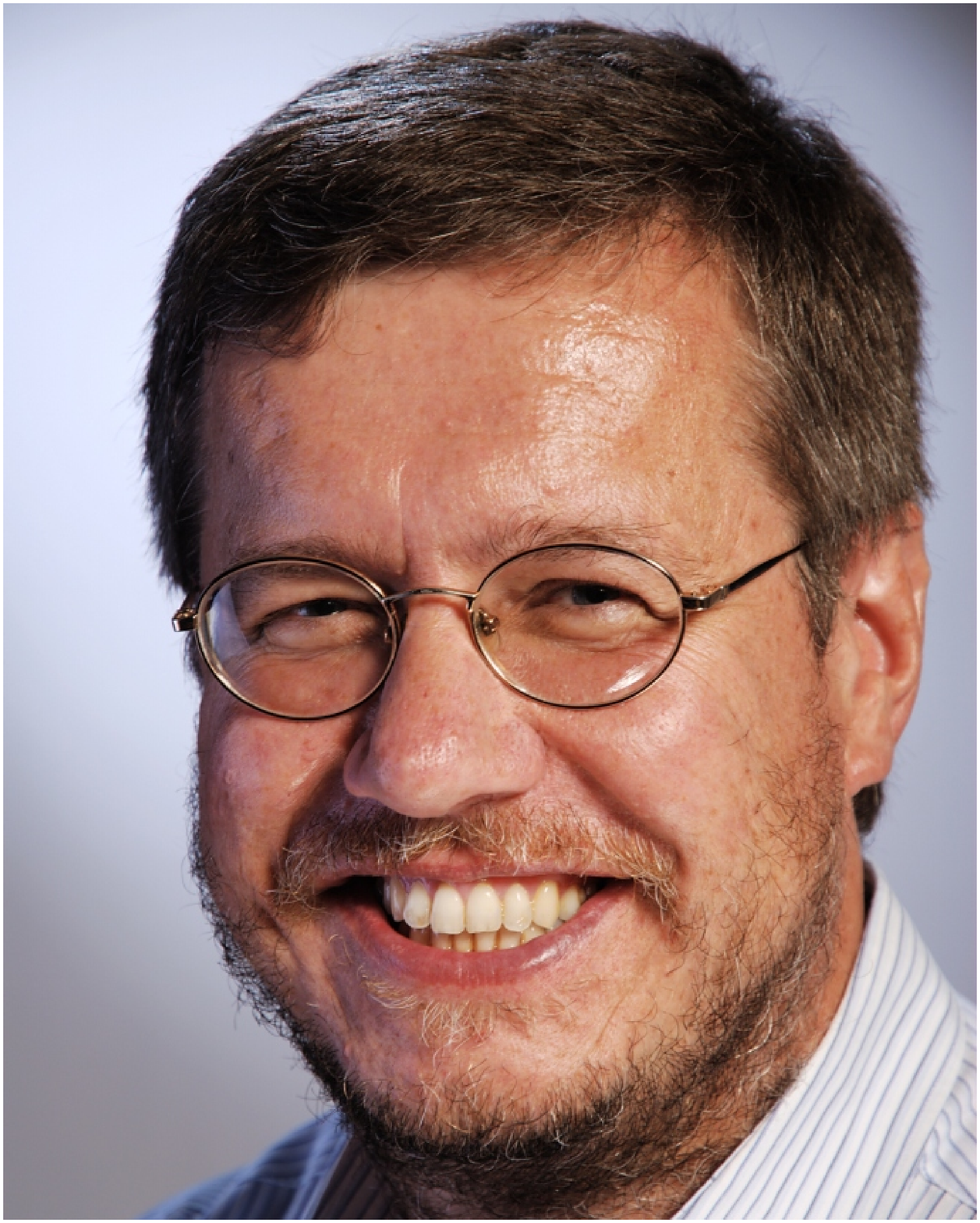}}]{Gernot Kubin}
(M'84) was born in Vienna, Austria, on June 24, 1960. He received his Dipl.-Ing. (1982) and Dr.techn. (1990, sub auspiciis praesidentis) degrees in Electrical Engineering from TU Vienna. He is Professor of Nonlinear Signal Processing and head of the Signal Processing and Speech Communication Laboratory (SPSC) at TU Graz/Austria since 2000. At TU Graz, he has been Dean of Studies in EE-Audio Engineering 2004-2007, Chair of the Senate 2007-2010 and 2013-now, and he has coordinated the Doctoral School in Information and Communications Engineering since 2007. Earlier international appointments include: CERN Geneva/CH 1980, TU Vienna 1983-2000, Erwin Schroedinger Fellow at Philips Natuurkundig Laboratorium Eindhoven/NL 1985, AT\&T Bell Labs Murray Hill/USA 1992-1993 and 1995, KTH Stockholm/S 1998, and Global IP Sound Sweden \& USA 2000-2001 and 2006, UC San Diego \& UC Berkeley/USA 2006, and UT Danang, Vietnam 2009. In 2011, he has co-founded Synvo GmbH, a start-up in the area of speech synthesis for mobile devices, and he holds leading positions in several national research centres for academia-industry collaboration such as the Vienna Telecommunications Research Centre FTW 1999-now (Key Researcher and Board of Governors), the Christian Doppler Laboratory for Nonlinear Signal Processing 2002-2010 (Founding Director), the Competence Network for Advanced Speech Technologies COAST 2006-2010 (Scientific Director), the COMET Excellence Projects Advanced Audio Processing AAP 2008-2013 and Acoustic Sensing and Design 2013-now (Key Researcher), and in the National Research Network on Signal and Information Processing in Science and Engineering SISE 2008-2011 (Principal Investigator) funded by the Austrian Science Fund. Since 2011, Dr.Kubin is an elected member of the Speech and Language Processing Technical Committee of the IEEE. His research interests are in nonlinear signals and systems, computational intelligence, as well as speech and audio communication. He has authored or co-authored over one hundred fifty peer-reviewed publications and ten patents.
\end{IEEEbiography}

\begin{IEEEbiography}[{\includegraphics[width=1in,height=1.25in,clip,keepaspectratio]{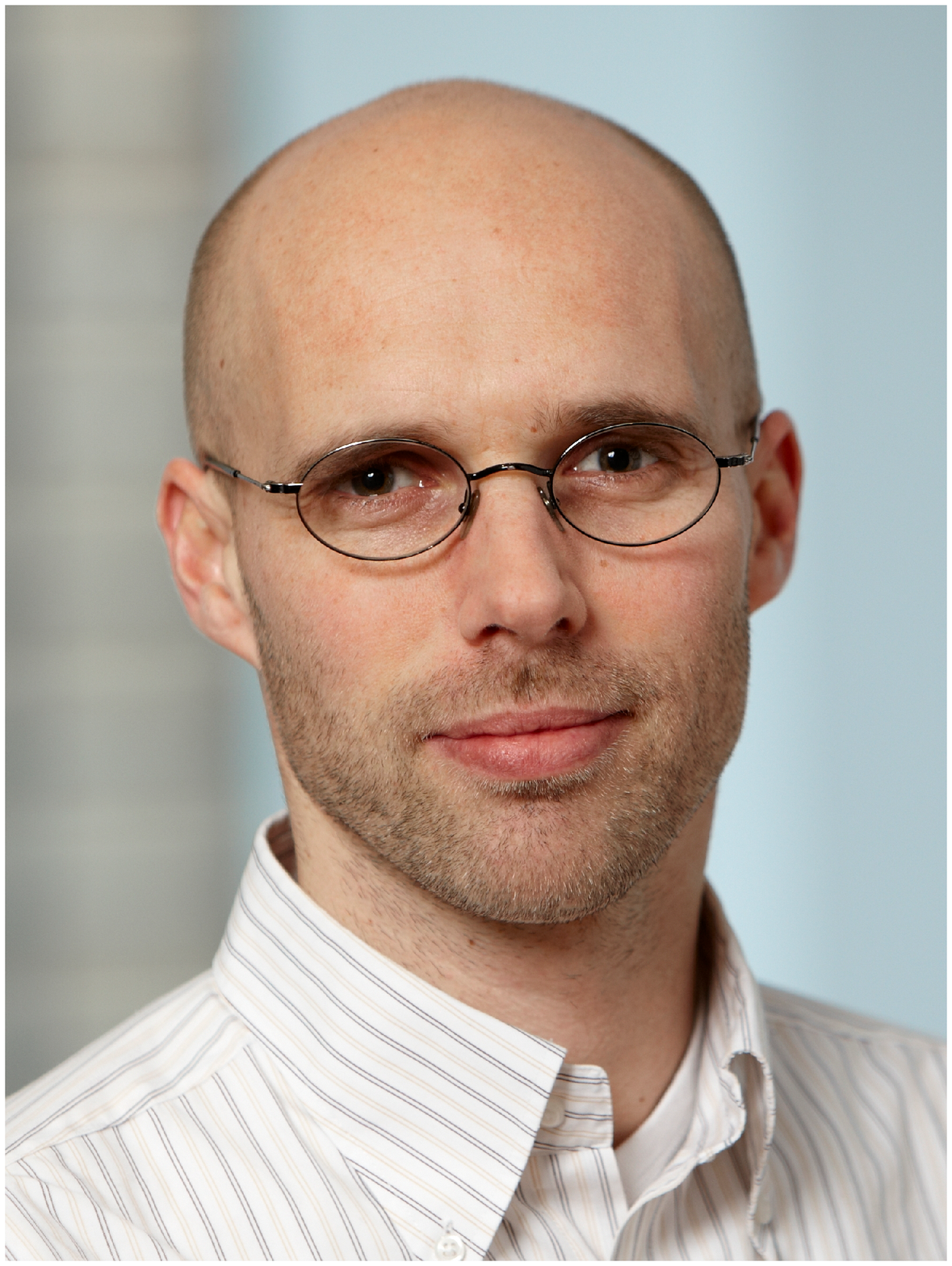}}]{Heinz Koeppl}
received the M.Sc. degree in physics from Graz Karl-Franzens University, Austria in 2001 and the Ph.D. degree in electrical engineering from Graz University of Technology, Austria in 2004. He has been a postdoctoral fellow at UC Berkeley, USA and at EPF Lausanne Switzerland until 2009 and an assistant professor at ETH Zurich, Switzerland until 2013. He is currently a full professor of Electrical Engineering at Technische Universitaet Darmstadt, Germany. He received the Erwin Schr\"odinger fellowship (2005) the IFAC Fred Margulies Ph.D. thesis award (2006), the CNRS/Max-Planck postdoctoral fellowship (2008), the SNSF Professorship (2010) and the IBM Faculty Award (2014). His research interests include the analysis and reconstruction of stochastic biomolecular networks and self-organizing phenomena such as flocking and self-assembly.
\end{IEEEbiography}

\end{document}